\title{Therapeutic target discovery using Boolean network attractors: avoiding pathological phenotypes}
\author{\foreignlanguage{french}{Arnaud Poret}\textsuperscript{1,*}, \foreignlanguage{french}{Jean-Pierre Boissel}\textsuperscript{2}}
\date{29 November 2013; 8 October 2014; 7 May 2015}
\newtheorem{theorem}{Theorem}
\begin{document}

\maketitle

\vfill

\noindent Copyright 2013-2015 \foreignlanguage{french}{Arnaud Poret}, \foreignlanguage{french}{Jean-Pierre Boissel}.

\bigskip

\noindent This document is licensed under the Creative Commons Attribution-NonCommercial-ShareAlike 4.0 International License. To view a copy of this license, visit \url{http://creativecommons.org/licenses/by-nc-sa/4.0/}.

\bigskip

\noindent \textsuperscript{1} \texttt{arnaud.poret@gmail.com}\\
UMR CNRS 5558 Biometry and Evolutionary Biology Laboratory\\
\foreignlanguage{french}{Villeurbanne}\\
\foreignlanguage{french}{France}\\
\url{https://lbbe.univ-lyon1.fr/}

\bigskip

\noindent \textsuperscript{2} \texttt{jean-pierre.boissel@novadiscovery.com}\\
Novadiscovery\\
\foreignlanguage{french}{Lyon}\\
\foreignlanguage{french}{France}\\
\url{http://www.novadiscovery.com}

\bigskip

\noindent \textsuperscript{*} Corresponding author.

\vfill

\newpage

\begin{abstract}
Target identification, one of the steps of drug discovery, aims at identifying biomolecules whose function should be therapeutically altered in order to cure the considered pathology. This work proposes an algorithm for \textit{in silico} target identification using Boolean network attractors. It assumes that attractors of dynamical systems, such as Boolean networks, correspond to phenotypes produced by the modeled biological system. Under this assumption, and given a Boolean network modeling a pathophysiology, the algorithm identifies target combinations able to remove attractors associated with pathological phenotypes. It is tested on a Boolean model of the mammalian cell cycle bearing a constitutive inactivation of the retinoblastoma protein, as seen in cancers, and its applications are illustrated on a Boolean model of Fanconi anemia. The results show that the algorithm returns target combinations able to remove attractors associated with pathological phenotypes and then succeeds in performing the proposed \textit{in silico} target identification. However, as with any \textit{in silico} evidence, there is a bridge to cross between theory and practice, thus requiring it to be used in combination with wet lab experiments. Nevertheless, it is expected that the algorithm is of interest for target identification, notably by exploiting the inexpensiveness and predictive power of computational approaches to optimize the efficiency of costly wet lab experiments.
\end{abstract}

\tableofcontents

\newpage

\section{Introduction}
Drug discovery, as its name indicates, aims at discovering new drugs against diseases. This process can be segmented into three steps: i) disease model provision, where experimental models are developed, ii) target identification, where therapeutic targets are proposed, and iii) target validation, where the proposed therapeutic targets are assessed. This work focuses on the second step of drug discovery: target identification \cite{knowles2003target,lindsay2003target}.

Given an organism suffering from a disease, target identification aims at finding where to act among its multitude of biomolecules in order to alleviate, or ultimately cure, the physiological consequences of the disease. These biomolecules on which perturbations should be applied are called targets and are targeted by drugs \cite{imming2006drugs}. This raises two questions: which target should be therapeutically perturbed and what type of perturbation should be applied on it. Broadly, the functional perturbation of a target by a drug can be either activating or inactivating, regardless the way the drug achieves it.

One solution is to test all, or at least a large number of, biomolecules for activation and inactivation. Knowing that targeting several biomolecules is potentially more effective \cite{anighoro2014polypharmacology,zimmermann2007multi}, the number of possibilities is consequently huge. This rather brute-force screening can be refined with knowledge about the pathophysiology of interest by identifying potential targets based on the role they play in it \cite{gibbs2000mechanism}. Even with this knowledge, experimentally assessing the selected potential targets through wet lab experiments is far from straightforward since such experiments are costly in time and resources \cite{kaitin2010deconstructing}. Fortunately, owing to their integrative power and low cost compared to wet lab experiments, \textit{in silico} approaches appear as valuable tools in improving the efficiency of target identification \cite{tang2014network,mak2013anti,sarker2013silico,yao2012silico,bahadduri2010targeting,ma2010silico,vujasinovic2010silico,chandra2009computational,saidani2009potential,nielsen2005using,duckworth2002insilico,noble1999biological}, as demonstrated through several works using various computational methods \cite{carels2015computational,frangou2014molecular,li2014computer,nicklas2014silico,rao2013identification,ravindranath2013silico,yang2013silico,iadevaia2010identification,ozbayraktar2010drug,koborova2009silico,dasika2006computational}.

However, the stumbling block of \textit{in silico} approaches is that they are built from the available knowledge: not all is known about everything. Nevertheless, an impressive and ever increasing amount of biological knowledge is already available in the scientific literature, databases and knowledge bases such as, to name a few, DrugBank \cite{wishart2008drugbank}, KEGG \cite{kanehisa2000kegg}, PharmGKB \cite{whirl2012pharmacogenomics}, Reactome \cite{croft2011reactome} and TTD \cite{chen2002ttd}. In addition to the difficulty of integrating an increasing body of knowledge comes the inherent complexity of biological systems themselves \cite{kitano2002systems}: this is where computational tools can help owing to their integrative power \cite{boissel2009modeling,boissel2008modelling,kitano2002computational}. This interplay between wet lab and computational biology is synergistic rather than competitive \cite{di2006vivo}. Since wet lab experiments produce factual results, they can be considered as trustworthy sources of knowledge. Once these factual pieces of knowledge are obtained, computational tools can help to integrate them and infer new ones. This computationally obtained knowledge can be subsequently used to direct further wet lab experiments, thus mutually potentiating the whole.

The goal of this work is to propose a computational methodology implemented in an algorithm for \textit{in silico} therapeutic target discovery using Boolean network attractors. It assumes that Boolean network attractors correspond to phenotypes produced by the modeled biological network, an assumption successfully applied in several works \cite{von2014boolean,fumia2013boolean,grieco2013integrative,moreno2013modeling,creixell2012navigating,rodriguez2012boolean,singh2012boolean,baverstock2011comparison,naldi2010diversity,thakar2010boolean,ge2009boolean,sahin2009modeling,davidich2008boolean,kervizic2008dynamical,faure2006dynamical,mendoza2006network,huang2000shape}. Assuming that a phenotype is an observable state, and thus relatively stable, of a biological system and assuming that the state of a biological system results from its dynamics, a phenotype is likely to correspond to an attractor. This assumption can be stated for any dynamical model but, in this work, only Boolean networks are considered. Reasons are that, in their most basic form, Boolean networks do not require quantitative information \cite{wynn2012logic} and that quantitative information is often not easy to obtain due to experimental limitations, particularly at the subcellular scale, the scale where drugs interact with their targets. Moreover, since synchronous Boolean networks are easier to compute than asynchronous ones \cite{garg2008synchronous}, this work only considers synchronous Boolean networks. This does not exclude the possibility, at a later stage, to extend the algorithm for both synchronous and asynchronous updating schemes.

For a biological network involved in a disease, two possible variants are considered: the physiological variant, exhibited by healthy organisms, which produces physiological phenotypes, and the pathological variant, exhibited by ill organisms, which produces pathological phenotypes or which fails to produce physiological ones. A physiological phenotype does not impair life quantity\slash quality whereas a pathological phenotype does. It should be noted that the loss of a physiological phenotype is also a pathological condition. The physiological and pathological variants differ in that the latter results from the occurrence of some alterations known to be responsible for disorders. With a pathological variant, there are two non-exclusive pathological scenarios: pathological phenotypes are gained or physiological phenotypes are lost.

The primary goal of the proposed algorithm is to identify, in a pathological variant, target combinations together with the perturbations to apply on them, here called bullets, which render it unable to exhibit pathological phenotypes. The secondary goal is to classify the obtained bullets according to their ability at rendering the pathological variant able to exhibit the previously lost physiological phenotypes, if any.

It should be noted that this work fits into the encompassing field investigating how to control biological systems, a field with tremendous applications in biomedicine. Several endeavors based on qualitative modeling approaches have been made in this way \cite{campbell2014stabilization,qiu2014control,srihari2014evolution,chen2013finding,kobayashi2013optimal,kobayashi2012symbolic}, demonstrating its utility in investigating how to take control over pathologically disturbed biological systems.

\section{Methods}
This section introduces some basic principles, namely biological and Boolean networks, defines some concepts and then describes the proposed algorithm. An example network to illustrate how it works plus a case study to illustrate its intended applications are also described. Finally, details about implementation and code availability are mentioned.

\subsection{Basic principles}
\subsubsection{Biological networks}
A biological network is a way to conceptualize a set of interacting biological entities where entities are represented by nodes and interactions by edges \cite{zhu2007getting,barabasi2004network}. It is based on graph theory \cite{bronshtein2007graphtheory,huber2007graphs,mason2007graph}, thus bringing formal tools to encode information about biological systems, particularly their topology \cite{ma2009insights}. Moreover, being graphs, biological networks offer a convenient visualization \cite{larkin1987diagram} of the complex interconnections lying in biological systems. As said Napoleon Bonaparte:
\begin{quote}
``A good sketch is better than a long speech.''
\end{quote}

Mathematically, a network can be seen as a digraph $G=(V,E)$ where $V=\lbrace v_{1},\dotsc,v_{n}\rbrace$ is the set of cardinality $n$ containing exactly all the nodes $v_{i}$ of the network and where $E=\lbrace (v_{i,1},v_{j,1}),\dotsc,(v_{i,m},v_{j,m})\rbrace \subseteq V^{2}$ is the set of cardinality $m$ containing exactly all the edges $(v_{i},v_{j})$ of the network. In practice, nodes represent entities and edges represent binary relations $R\subseteq V^{2}$ involving them: $v_{i}\ R\ v_{j}$. For example, in gene regulatory networks, nodes represent gene products and edges represent gene expression modulations \cite{xiao2009tutorial,rockett2006gene}.

\subsubsection{Boolean networks}
While being conceptually simple, Boolean networks \cite{saadatpour2012boolean} are able to predict and reproduce features of biological systems and then to bring relevant insights \cite{albert2014boolean,wang2012boolean,albert2008boolean,bornholdt2008boolean,huang2001genomics,leclerco1983boolean}. This makes them an attractive and efficient approach, especially when the complexity of biological systems renders quantitative approaches unfeasible due to the amount of quantitative details they require. As their name indicates, Boolean networks are based on Boolean logic \cite{boole1847mathematical} and, like biological networks, are also based on graph theory: nodes represent Boolean variables and edges represent interdependencies between them.

Mathematically, a Boolean network is a network where nodes are Boolean variables $x_{i}$ and where edges $(x_{i},x_{j})$ represent the binary $is\ input\ of$ relation: $x_{i}\ is\ input\ of\ x_{j}$. Each $x_{i}$ has $b_{i}\in [\![0,n]\!]$ inputs $x_{i,1},\dotsc,x_{i,b_{i}}$. The variables which are not inputs of $x_{i}$ have no direct influence on it. If $b_{i}=0$ then $x_{i}$ is a parameter and does not depend on other variables. At each iteration $k\in [\![k_{0},k_{end}]\!]$ of the simulation, the value $x_{i}(k)\in \lbrace 0,1\rbrace$ of each $x_{i}$ is updated to the value $x_{i}(k+1)$ using a Boolean function $f_{i}$ and the values $x_{i,1}(k),\dotsc,x_{i,b_{i}}(k)$ of its inputs, as in the following pseudocode:

\begin{algorithmic}[1]
\FOR{$k\in [\![k_{0},k_{end}-1]\!]$}
    \STATE $x_{1}(k+1)=f_{1}(x_{1,1}(k),\dotsc,x_{1,b_{1}}(k))$
    \STATE \dots
    \STATE $x_{n}(k+1)=f_{n}(x_{n,1}(k),\dotsc,x_{n,b_{n}}(k))$
\ENDFOR
\end{algorithmic}
which can be written in a more concise form:

\begin{algorithmic}[1]
\FOR{$k\in [\![k_{0},k_{end}-1]\!]$}
    \STATE $\boldsymbol{x}(k+1)=\boldsymbol{f}(\boldsymbol{x}(k))$
\ENDFOR
\end{algorithmic}
where $\boldsymbol{f}=(f_{1},\dotsc,f_{n})$ is the Boolean transition function and $\boldsymbol{x}=(x_{1},\dotsc,x_{n})$ is the state vector. The value $\boldsymbol{x}(k)=(x_{1}(k),\dotsc,x_{n}(k))\in \lbrace 0,1\rbrace^{n}$ of $\boldsymbol{x}$ at $k$ belongs to the state space $S=\lbrace 0,1\rbrace^{n}$ which is the set of cardinality $2^{n}$ containing exactly all the possible states.

If the values of all the $x_{i}$ are updated simultaneously at each $k$ then the network is synchronous, otherwise it is asynchronous. With synchronous Boolean networks, $\boldsymbol{x}(k)$ has a unique possible successor $\boldsymbol{x}(k+1)$: synchronous Boolean networks are deterministic. In the particular case where $k=k_{0}$, $\boldsymbol{x}(k_{0})=\boldsymbol{x}_{0}$ is the initial state and, in deterministic dynamical systems, determines entirely the trajectory $w=(\boldsymbol{x}(k_{0}),\dotsc,\boldsymbol{x}(k_{end}))$. In this work, it is assumed that $k_{0}=1$, so $w$ is a sequence of length $k_{end}$ resulting from the iterative computation of $\boldsymbol{x}(k)$ from $k_{0}$ up to $k_{end}$. This iterative computation can be seen as the discretization of a time interval: Boolean networks are discrete dynamical systems as they simulate discretely the time course of the state vector.

The set $A=\lbrace a_{1},\dotsc,a_{p}\rbrace$ of cardinality $p$ containing exactly all the attractors $a_{i}$ is called the attractor set. Due to the determinism of synchronous Boolean networks, all the attractors are cycles. A cycle is a sequence $(\boldsymbol{x}_{1},\dotsc,\boldsymbol{x}_{q})$ of length $q$ such that $\forall j\in [\![1,q]\!],\ \boldsymbol{x}_{j+1}=\boldsymbol{f}(\boldsymbol{x}_{j})$ and $\boldsymbol{x}_{q+1}=\boldsymbol{x}_{1}$: once the system reaches a state $\boldsymbol{x}_{j}$ belonging to a cycle, it successively visits its states $\boldsymbol{x}_{j+1},\dotsc,\boldsymbol{x}_{q},\boldsymbol{x}_{1},\dotsc,\boldsymbol{x}_{j}$ for infinity. In the particular case where $q=1$, $a_{i}$ is a point attractor. The set $B_{i}\subseteq S$ containing exactly all the $\boldsymbol{x}\in S$ from which $a_{i}$ can be reached is called its basin of attraction. With deterministic dynamical systems, the family of sets $(B_{1},\dotsc,B_{p})$ constitutes a partition of $S$.

\subsection{Definitions}
Some concepts used in this work should be formally defined.
\begin{itemize}
\item \textbf{physiological phenotype}: A phenotype which does not impair the life quantity\slash quality of the organism which exhibits it.
\item \textbf{pathological phenotype}: A phenotype which impairs the life quantity\slash quality of the organism which exhibits it.
\item \textbf{variant (of a biological network)}: Given a biological network of interest, a variant is one of its versions, namely the network plus eventually some modifications. It should be noted that this does not exclude the possibility that a variant can be the network of interest as is.
\item \textbf{physiological variant}: A variant which produces only physiological phenotypes. It is the biological network of interest as it should be, namely the one of healthy organisms.
\item \textbf{pathological variant}: A variant which produces at least one pathological phenotype or which fails to produce at least one physiological phenotype. It is a dysfunctional version of the biological network of interest, namely a version found in ill organisms.
\item \textbf{physiological attractor set}: The attractor set $A_{physio}$ of the physiological variant.
\item \textbf{pathological attractor set}: The attractor set $A_{patho}$ of the pathological variant.
\item \textbf{physiological Boolean transition function}: The Boolean transition function $\boldsymbol{f}_{physio}$ of the physiological variant.
\item \textbf{pathological Boolean transition function}: The Boolean transition function $\boldsymbol{f}_{patho}$ of the pathological variant.
\item \textbf{run}: An iterative computation of $\boldsymbol{x}(k)$ starting from an $\boldsymbol{x}_{0}$ until an $a_{i}$ is reached. It returns $w=(\boldsymbol{x}(k_{0}),\dotsc,\boldsymbol{x}(k_{end}))$ where $k_{end}$ depends on when $a_{i}$ is reached, and then on $\boldsymbol{x}_{0}$.
\item \textbf{physiological attractor}: An $a_{i}$ such that $a_{i}\in A_{physio}$.
\item \textbf{pathological attractor}: An $a_{i}$ such that $a_{i}\notin A_{physio}$.
\item \textbf{modality}: The functional perturbation $moda_{i}$ applied on a node $v_{j}\in V$ of the network, either activating ($moda_{i}=1$) or inactivating ($moda_{i}=0$): at each $k$, $moda_{i}$ overwrites $f_{j}(\boldsymbol{x}(k))$ making $x_{j}(k+1)=moda_{i}$.
\item \textbf{target}: A node $targ_{i}\in V$ of the network on which a $moda_{i}$ is applied.
\item \textbf{bullet}: A couple $(c_{targ},c_{moda})$ where $c_{targ}=(targ_{1},\dotsc,targ_{r})$ is a combination without repetition of $targ_{i}$ and where $c_{moda}=(moda_{1},\dotsc,moda_{r})$ is an arrangement with repetition of $moda_{i}$, $r\in [\![1,n]\!]$ being the number of targets in the bullet. Here, $moda_{i}$ is intended to be applied on $targ_{i}$.
\item \textbf{therapeutic bullet}: A bullet which makes $A_{patho}\subseteq A_{physio}$.
\item \textbf{silver bullet}: A therapeutic bullet which makes $A_{patho}\varsubsetneq A_{physio}$.
\item \textbf{golden bullet}: A therapeutic bullet which makes $A_{patho}=A_{physio}$.
\end{itemize}
The assumed link between phenotypes and attractors is the reason why attractors are qualified as either physiological or pathological according to the phenotype they produce. This is also the reason why, in this work, target identification aims at manipulating attractor sets of pathological variants.

\subsection{Steps of the algorithm}
The algorithm has two goals: i) finding therapeutic bullets, and ii) classifying them as either golden or silver. A therapeutic bullet makes the pathological variant unable at reaching pathological attractors, that is $A_{patho}\subseteq A_{physio}$. If such a bullet is applied on a pathological variant, the organism bearing it no longer exhibits the associated pathological phenotypes. However, a therapeutic bullet does not necessarily preserve\slash restore the physiological attractors. If a therapeutic bullet preserves\slash restores the physiological attractors, that is if $A_{patho}=A_{physio}$, then it is a golden one, but if $A_{patho}\varsubsetneq A_{physio}$ then it is a silver one.

Given a physiological and a pathological variant, that is $\boldsymbol{f}_{physio}$ and $\boldsymbol{f}_{patho}$, the algorithm follows five steps:
\begin{enumerate}
\item with $\boldsymbol{f}_{physio}$ it computes the control attractor set $A_{physio}$
\item it generates bullets and, for each of them, it performs the three following steps
\item with $\boldsymbol{f}_{patho}$ plus the bullet, it computes the variant attractor set $A_{patho}$
\item it assesses the therapeutic potential of the bullet by comparing $A_{physio}$ and $A_{patho}$ to detect pathological attractors
\item if the bullet is therapeutic then it is classified as either golden or silver by comparing $A_{physio}$ and $A_{patho}$ for equality
\end{enumerate}
These steps can be written in pseudocode as:

\begin{algorithmic}[1]
\STATE with $\boldsymbol{f}_{physio}$ compute $A_{physio}$
\STATE generate $bullet\_set$
\FOR{$bullet\in bullet\_set$}
    \STATE with $\boldsymbol{f}_{patho}$ plus $bullet$ compute $A_{patho}$
    \IF{$A_{patho}\subseteq A_{physio}$}
        \STATE $bullet$ is therapeutic
        \IF {$A_{patho}=A_{physio}$}
            \STATE $bullet$ is golden
        \ELSE
            \STATE $bullet$ is silver
        \ENDIF
    \ENDIF
\ENDFOR
\end{algorithmic}
The algorithm is described step by step but can be found as one block of pseudocode in \hyperref[appendix1]{\textit{Appendix \ref*{appendix1}}} page \pageref{appendix1}.

\subsubsection{Step 1: computing $A_{physio}$}
First of all, $A_{physio}$ has to be computed since it is the control and, as such, determines what is pathological. To do so, runs are performed with $\boldsymbol{f}_{physio}$ and the reached $a_{i}$ are stored in $A_{physio}$. However, $\boldsymbol{x}_{0}\in S$ and $card\ S$ increases exponentially with $n$. Even for reasonable values of $n$, $card\ S$ explodes: more than $1\ 000\ 000$ possible $\boldsymbol{x}_{0}$ for $n=20$. One solution ensuring that all the $a_{i}$ are reached is to start a run from each of the possible $\boldsymbol{x}_{0}$, that is from each of the $\boldsymbol{x}\in S$. Practically, this is unfeasible for an arbitrary value of $n$ since the required computational capacity can be too demanding. For example, assuming that a run requires $1$ millisecond and that $n=50$, performing a run from each of the $2^{50}$ $\boldsymbol{x}\in S$ requires nearly $36\ 000$ years.

Given that with deterministic dynamical systems $(B_{1},\dotsc,B_{p})$ is a partition of $S$, a solution is to select a subset $D\subseteq S$ of a reasonable cardinality containing the $\boldsymbol{x}_{0}$ to start from. In this work, $D$ is randomly selected from a uniform distribution. The stumbling block of this solution is that it does not ensure that at least one $\boldsymbol{x}_{0}$ per $B_{i}$ is selected and then does not ensure that all the $a_{i}$ are reached. This stumbling block holds only if $card\ D<card\ S$.

Again given that synchronous Boolean networks are deterministic, if a run visits a state already visited during a previous run then its destination, that is the reached attractor, is already found. If so, the run can be stopped and the algorithm can jump to the next one. To implement this, the previous trajectories are stored in a set $H$, the history, and at each $k$ the algorithm checks if $\exists w\in H\colon \boldsymbol{x}(k)\in w$. If this check is positive then the algorithm jumps to the next run.

To detect the attractors, since with deterministic dynamical systems they are cycles, the algorithm checks at each $k$ if $\boldsymbol{x}(k+1)$ is an already visited state of the current run, namely if $\exists k'\in [\![1,k]\!]\colon \boldsymbol{x}(k+1)=\boldsymbol{x}(k')$. If this check is positive then $a_{i}=(\boldsymbol{x}(k'),\dotsc,\boldsymbol{x}(k))$.

This step can be written in pseudocode as:

\begin{algorithmic}[1]
\STATE \textbf{prompt} $card\ D$
\STATE $card\ D=min(card\ D,2^{n})$
\STATE generate $D\subseteq S$
\STATE $H=\lbrace \rbrace$
\STATE $A_{physio}=\lbrace \rbrace$
\FOR{$x_{0}\in D$}
    \STATE $k=1$
    \STATE $\boldsymbol{x}(k)=x_{0}$
    \WHILE{\TRUE}
        \IF{$\exists w\in H\colon \boldsymbol{x}(k)\in w$}
            \STATE \textbf{break}
        \ENDIF
        \STATE $\boldsymbol{x}(k+1)=\boldsymbol{f}_{physio}(\boldsymbol{x}(k))$
        \IF{$\exists k'\in [\![1,k]\!]\colon \boldsymbol{x}(k+1)=\boldsymbol{x}(k')$}
            \STATE $A_{physio}=A_{physio}\cup \lbrace (\boldsymbol{x}(k'),\dotsc,\boldsymbol{x}(k))\rbrace$
            \STATE \textbf{break}
        \ENDIF
        \STATE $k=k+1$
    \ENDWHILE
    \STATE $H=H\cup \lbrace (\boldsymbol{x}(1),\dotsc,\boldsymbol{x}(k))\rbrace$
\ENDFOR
\RETURN $A_{physio}$
\STATE do step 2
\end{algorithmic}
Line 2 catches the mistake $card\ D>card\ S$.

It should be noted that the purpose of this work is not to propose an algorithm for finding Boolean network attractors since advanced algorithms for such tasks are already published \cite{guo2014parallel,berntenis2013detection,zheng2013efficient,dubrova2011sat,ay2009scalable}. The purpose is to propose a computational methodology exploiting Boolean network attractors for \textit{in silico} target identification, a methodology which requires \textit{de facto} these attractors to be found. This point is discussed in the \hyperref[conclusion]{\textit{Conclusion}} section page \pageref{conclusion}.

\subsubsection{Step 2: generating bullets}
Bullets are candidate perturbations to apply on the pathological variant to make it unable at reaching pathological attractors and then unable at producing pathological phenotypes. Generating a bullet requires a choice of $targ_{i}\in V$ and associated $moda_{i}\in \lbrace 0,1\rbrace$. In this work, there is no sequencing in target engagement nor in modality application. This means that, given a bullet and during a given run, all the $moda_{i}$ are applied on their corresponding $targ_{i}$ throughout the run. As a consequence, for a given bullet, choosing the same $targ_{i}$ more than once is senseless while it is possible to choose the same $moda_{i}$ for more than one $targ_{i}$. Therefore, a bullet is a combination $c_{targ}$ without repetition of $targ_{i}$ together with an arrangement $c_{moda}$ with repetition of $moda_{i}$.

If bullets containing $r$ targets have to be generated then there are $n!/(r!\cdot (n-r)!)$ possible $c_{targ}$ and, for each of them, there are $2^{r}$ possible $c_{moda}$. This raises the same computational difficulty than with the state space explosion since there are $(n!\cdot 2^{r})/(r!\cdot (n-r)!)$ possible bullets. For example, with $n=50$ and $r=3$ there are more than $150\ 000$ possible bullets. Knowing that the algorithm, as explained below, computes one attractor set per bullet, the computation time becomes practically unfeasible.

To overcome this barrier, the algorithm asks for $r$ as an interval $[\![r_{min},r_{max}]\!]$, asks for a maximum number $max_{targ}$ of $c_{targ}$ to generate and asks for a maximum number $max_{moda}$ of $c_{moda}$ to test for each $c_{targ}$. The algorithm then generates a set $C_{targ}$ of $c_{targ}$ with $card\ C_{targ}\leq max_{targ}$ by randomly selecting, from a uniform distribution and without repetition, nodes in the network. In the same way, the algorithm generates a set $C_{moda}$ of $c_{moda}$ with $card\ C_{moda}\leq max_{moda}$ by randomly choosing, from a uniform distribution and with repetition, modalities as either activating ($1$) or inactivating ($0$). The result is the bullets: per $r\in[\![r_{min},r_{max}]\!]$, a $C_{targ}$ together with a $C_{moda}$. As with the state space explosion, the stumbling block of this method is that it does not ensure that all the possible $c_{targ}$ together with all the possible $c_{moda}$ are tested. This stumbling block holds only if $max_{targ}<n!/(r!\cdot (n-r)!)$ or $max_{moda}<2^{r}$.

This step can be written in pseudocode as:

\begin{algorithmic}[1]
\STATE \textbf{prompt} $r_{min},r_{max},max_{targ},max_{moda}$
\STATE $r_{max}=min(r_{max},n)$
\STATE $golden\_set=\lbrace \rbrace$
\STATE $silver\_set=\lbrace \rbrace$
\FOR{$r\in [\![r_{min},r_{max}]\!]$}
    \STATE $max_{targ}^{r}=min(max_{targ},n!/(r!\cdot (n-r)!))$
    \STATE $max_{moda}^{r}=min(max_{moda},2^{r})$
    \STATE $C_{targ}=\lbrace \rbrace$
    \STATE $C_{moda}=\lbrace \rbrace$
    \WHILE{$card\ C_{targ}<max_{targ}^{r}$}
        \STATE generate $c_{targ}\notin C_{targ}$
        \STATE $C_{targ}=C_{targ}\cup \lbrace c_{targ}\rbrace$
    \ENDWHILE
    \WHILE{$card\ C_{moda}<max_{moda}^{r}$}
        \STATE generate $c_{moda}\notin C_{moda}$
        \STATE $C_{moda}=C_{moda}\cup \lbrace c_{moda}\rbrace$
    \ENDWHILE
    \STATE do steps 3 to 5
\ENDFOR
\RETURN $golden\_set,silver\_set$
\end{algorithmic}
Line 2 catches the mistake $r>n$. Lines 3 and 4 create the sets in which the therapeutic bullets found in step 4 are classified as either golden or silver in step 5. Lines 6 and 7 catch the mistake where $max_{targ}$ or $max_{moda}$ is greater than its maximum, which depends on $r$, hence the creation of $max_{targ}^{r}$ and $max_{moda}^{r}$ to preserve the initially supplied value. Lines 11 and 15 ensure that only new $c_{targ}$ and $c_{moda}$ are generated.

\subsubsection{Step 3: computing $A_{patho}$}
Having the control attractor set $A_{physio}$ and a bullet $(c_{targ},c_{moda})\in C_{targ}\times C_{moda}$, the algorithm computes the variant attractor set $A_{patho}$ under the effect of $(c_{targ},c_{moda})$ by almost the same way $A_{physio}$ is computed in step 1. However, $\boldsymbol{f}_{patho}$ is used instead of $\boldsymbol{f}_{physio}$ and $(c_{targ},c_{moda})$ is applied: at each $k$, $f_{j}(\boldsymbol{x}(k))$ is overwritten by $moda_{i}\in c_{moda}$, that is $x_{j}(k+1)=moda_{i}$, provided that $v_{j}=targ_{i}\in c_{targ}$. In order to apply all the generated bullets, the algorithm uses two nested $for$ loops. For each $c_{targ}\in C_{targ}$, it uses successively all the $c_{moda}\in C_{moda}$. For each $(c_{targ},c_{moda})$, the algorithm computes the corresponding $A_{patho}$ and does steps 4 and 5.

This step can be written in pseudocode as:

\begin{algorithmic}[1]
\FOR{$c_{targ}\in C_{targ}$}
    \FOR{$c_{moda}\in C_{moda}$}
        \STATE $H=\lbrace \rbrace$
        \STATE $A_{patho}=\lbrace \rbrace$
        \FOR{$x_{0}\in D$}
            \STATE $k=1$
            \STATE $\boldsymbol{x}(k)=x_{0}$
            \WHILE{\TRUE}
                \IF{$\exists w\in H\colon \boldsymbol{x}(k)\in w$}
                    \STATE \textbf{break}
                \ENDIF
                \STATE $\boldsymbol{x}(k+1)=\boldsymbol{f}_{patho}(\boldsymbol{x}(k))$
                \FOR{$targ_{i}\in c_{targ}$}
                    \FOR{$v_{j}\in V$}
                        \IF{$v_{j}=targ_{i}$}
                            \STATE $x_{j}(k+1)=moda_{i}$
                        \ENDIF
                    \ENDFOR
                \ENDFOR
                \IF{$\exists k'\in [\![1,k]\!]\colon \boldsymbol{x}(k+1)=\boldsymbol{x}(k')$}
                    \STATE $A_{patho}=A_{patho}\cup \lbrace (\boldsymbol{x}(k'),\dotsc,\boldsymbol{x}(k))\rbrace$
                    \STATE \textbf{break}
                \ENDIF
                \STATE $k=k+1$
            \ENDWHILE
            \STATE $H=H\cup \lbrace (\boldsymbol{x}(1),\dotsc,\boldsymbol{x}(k))\rbrace$
        \ENDFOR
        \STATE do step 4 and 5
    \ENDFOR
\ENDFOR
\end{algorithmic}
Lines 13--19 are where bullets are applied.

\subsubsection{Step 4: identifying therapeutic bullets}
To identify therapeutic bullets among the generated ones, for each $(c_{targ},c_{moda})$ tested in step 3 and once the corresponding $A_{patho}$ is obtained, the algorithm compares it with $A_{physio}$ to check if $A_{patho}\subseteq A_{physio}$. This check ensures that the pathological attractors are removed and that if new attractors appear then they are physiological. If this check is positive then the bullet is therapeutic and the algorithm pursues with step 5.

This step can be written in pseudocode as:

\begin{algorithmic}[1]
\IF{$A_{patho}\subseteq A_{physio}$}
    \STATE do step 5
\ENDIF
\end{algorithmic}

\subsubsection{Step 5: assessing therapeutic bullets}
Therapeutic bullets are qualified as either golden or silver according to their ability at making the pathological variant reaching the physiological attractors. All therapeutic bullets, being golden or silver, remove the pathological attractors without creating new ones, that is $A_{patho}\subseteq A_{physio}$. However, this does not imply that they preserve\slash restore the physiological attractors. A golden bullet preserves\slash restores the physiological attractors: $A_{patho}=A_{physio}$ whereas a silver bullet does not: $A_{patho}\varsubsetneq A_{physio}$.

In this setting, golden bullets are perfect therapies whereas silver bullets are not. However, since precious things are rare and just as gold is rarer than silver, finding golden bullets is less likely than finding silver ones. Indeed, given that more constraints are required for a therapeutic bullet to be golden, it is more likely that the found therapeutic bullets are silver, except in one case: $card\ A_{physio}=1$.

\begin{theorem}
\label{theorem}
If $card\ A_{physio}=1$ then therapeutic bullets are golden.
\end{theorem}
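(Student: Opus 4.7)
The plan is a short set-theoretic argument that leverages the determinism of synchronous Boolean networks together with finiteness of the state space.

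First, I would record the definitions in play: a bullet is therapeutic when $A_{patho}\subseteq A_{physio}$, and it is golden precisely when equality $A_{patho}=A_{physio}$ holds. So given the hypothesis $\mathrm{card}\ A_{physio}=1$, write $A_{physio}=\{a\}$ for a single attractor $a$. A therapeutic bullet then gives $A_{patho}\subseteq\{a\}$, leaving only two possibilities: $A_{patho}=\emptyset$ or $A_{patho}=\{a\}$. The entire proof reduces to ruling out the first.

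Next, I would argue that $A_{patho}$ cannot be empty. The pathological dynamics, perturbed by the bullet, is still a deterministic self-map of the finite state space $S=\{0,1\}^n$: at every step $k$, $\boldsymbol{x}(k+1)$ is uniquely determined from $\boldsymbol{x}(k)$ (the bullet just overwrites some coordinates with constants $moda_i$, which is again a deterministic rule). Starting a run from any $\boldsymbol{x}_0\in S$, the trajectory $(\boldsymbol{x}(1),\boldsymbol{x}(2),\dots)$ lies in a finite set, so by the pigeonhole principle some state must repeat; by determinism, the first repetition closes a cycle, i.e., an attractor in the sense defined earlier in the excerpt. Hence at least one attractor exists for the perturbed pathological system, so $A_{patho}\neq\emptyset$.

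Combining the two observations, $\emptyset\neq A_{patho}\subseteq\{a\}$ forces $A_{patho}=\{a\}=A_{physio}$, which is exactly the definition of a golden bullet. I expect no real obstacle here; the only subtlety worth flagging explicitly is that the bullet-perturbed transition function remains deterministic and total on $S$, so the standard ``finite deterministic orbit must enter a cycle'' argument applies unchanged — this is the one point where the synchronous updating assumption from the Methods section is essential.
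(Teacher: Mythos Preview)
Your argument is correct and follows essentially the same route as the paper: from $A_{patho}\subseteq A_{physio}=\{a\}$ you reduce to the two cases $A_{patho}=\emptyset$ or $A_{patho}=\{a\}$, then invoke determinism on a finite state space to exclude the empty case. The paper's proof simply asserts that deterministic dynamical systems have nonempty attractor sets, whereas you spell out the pigeonhole justification explicitly, but the structure is identical.
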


\begin{proof}
\begin{IEEEeqnarray*}{l}
(\text{therapeutic bullet}) \Rightarrow (A_{patho} \subseteq A_{physio})\hfill (1)\\
(1) \Rightarrow (A_{patho} \in \mathcal{P}(A_{physio}))\hfill (2)\\
(card\ A_{physio}=1) \Rightarrow (A_{physio}=\lbrace a \rbrace)\hfill (3)\\
(3) \Rightarrow (\mathcal{P}(A_{physio})=\lbrace \emptyset,\lbrace a \rbrace \rbrace)\hfill (4)\\
((2) \land (4)) \Rightarrow ((A_{patho}=\lbrace a \rbrace) \lor (A_{patho}=\emptyset))\hfill (5)\\
(\text{deterministic dynamical systems}) \Rightarrow (A \neq \emptyset)\hfill (6)\\
(6) \Rightarrow (A_{patho} \neq \emptyset)\hfill (7)\\
((5) \land (7)) \Rightarrow (A_{patho}=\lbrace a \rbrace)\hfill (8)\\
((3) \land (8)) \Rightarrow (A_{patho}=A_{physio})\hfill (9)\\
(9) \Rightarrow (\text{therapeutic bullet is golden})\hfill (10)% \qedhere
\end{IEEEeqnarray*}
\end{proof}

Practically, an organism bearing a pathological variant treated with a therapeutic bullet no longer exhibits the associated pathological phenotypes. Moreover, if the therapeutic bullet is golden then the organism exhibits the same phenotypes than its healthy counterpart. However, if the therapeutic bullet is silver then the organism fails to exhibit at least one physiological phenotype. With a silver bullet this is a matter of choice: what is the less detrimental between a silver bullet and no therapeutic bullet at all?

This step can be written in pseudocode as:

\begin{algorithmic}[1]
\IF{$A_{patho}=A_{physio}$}
    \STATE $golden\_set=golden\_set\cup \lbrace (c_{targ},c_{moda})\rbrace$
\ELSE
    \STATE $silver\_set=silver\_set\cup \lbrace (c_{targ},c_{moda})\rbrace$
\ENDIF
\end{algorithmic}

\subsection{Example network}
To illustrate the algorithm, it is used on a Boolean model of the mammalian cell cycle published by Faure \textit{et al} \cite{faure2006dynamical}. This model is chosen for several reasons: i) a synchronous updating is performed: to date, the algorithm focuses on synchronous Boolean networks, ii) a mammalian biological system is modeled: the closer to human physiology the model is the better it illustrates the intended applications, iii) the cell cycle is a at the heart of cancer: this gives relevancy to the example network, iv) the network comprises ten nodes: easily computable in face of its state space, and v) attractors are already computed: useful to validate the algorithm in finding them. A graphical representation of the example network is shown in \hyperref[cell_cycle]{\textit{Figure \ref*{cell_cycle}}} page \pageref{cell_cycle}. Below are the corresponding Boolean functions where, for the sake of readability, $x_{i}$ stands for $x_{i}(k)$ and $x_{i+}$ stands for $x_{i}(k+1)$:

\begin{footnotesize}
\begin{IEEEeqnarray*}{r C l}
CycD_{+}&=&CycD\\
Rb_{+}&=&(\lnot CycD \land \lnot CycE \land \lnot CycA \land \lnot CycB) \lor (p27 \land \lnot CycD \land \lnot CycB)\\
E2F_{+}&=&(\lnot Rb \land \lnot CycA \land \lnot CycB) \lor (p27 \land \lnot Rb \land \lnot CycB)\\
CycE_{+}&=&E2F \land \lnot Rb\\
CycA_{+}&=&(E2F \land \lnot Rb \land \lnot Cdc20 \land \lnot(Cdh1 \land UbcH10))\\
&&\lor (CycA \land \lnot Rb \land \lnot Cdc20 \land \lnot(Cdh1 \land UbcH10))\\
p27_{+}&=&(\lnot CycD \land \lnot CycE \land \lnot CycA \land \lnot CycB)\\
&&\lor (p27 \land \lnot(CycE \land CycA) \land \lnot CycB \land \lnot CycD)\\
Cdc20_{+}&=&CycB\\
Cdh1_{+}&=&(\lnot CycA \land \lnot CycB) \lor Cdc20 \lor (p27 \land \lnot CycB)\\
UbcH10_{+}&=&\lnot Cdh1 \lor (Cdh1 \land UbcH10 \land (Cdc20 \lor CycA \lor CycB))\\
CycB_{+}&=&\lnot Cdc20 \land \lnot Cdh1
\end{IEEEeqnarray*}
\end{footnotesize}

\begin{figure*}[h!]
\begin{center}
\includegraphics[width=0.75\textwidth]{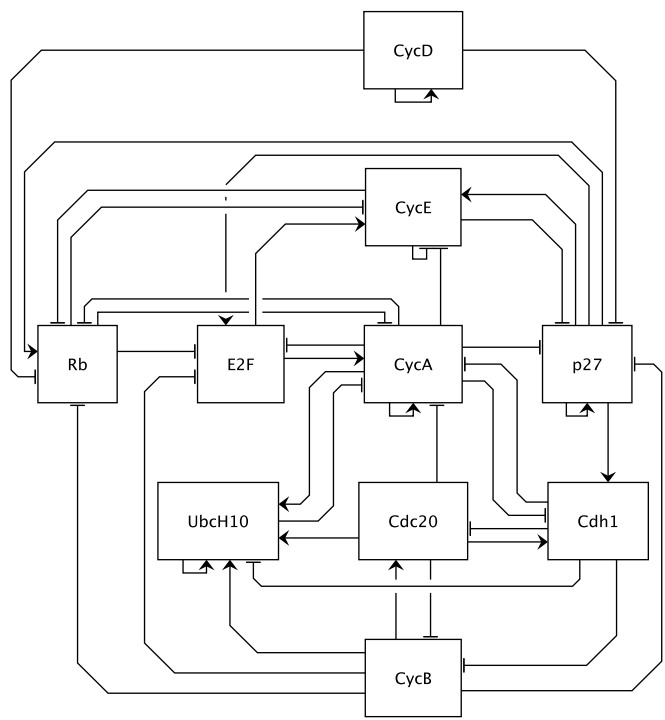}
\caption[Example network]{\label{cell_cycle} Graphical representation of the example network adapted from \cite{faure2006dynamical}. CDKs (cyclin-dependent kinases) are the catalytic partners of cyclins and, in this model, are not explicitly shown since the activity of CDK-cyclin complexes essentially depends on cyclins. Furthermore, the inhibition of E2F by Rb is modeled by opposing Rb to the effects of E2F on its targets. The same applies to the inhibition of CycE and CycA by p27. For a complete description of the model, see \cite{faure2006dynamical}. CycD: CDK4\slash 6-cyclin D complex, input of the model, initiates the cell cycle, activated by positive signals such as growth factors; CycE: CDK2-cyclin E complex; CycA: CDK2-cyclin A complex; CycB: CDK1-cyclin B complex; Rb: retinoblastoma protein, a tumor suppressor; E2F: a family of transcription factors divided into activator and repressor members, in this model E2F represents the activator members; p27: p27\slash Kip1, a CKI (CDK inhibitor); Cdc20: an APC (Anaphase Promoting Complex, an E3 ubiquitin ligase) activator; Cdh1: an APC activator; UbcH10: an E2 ubiquitin conjugating enzyme.}
\end{center}
\end{figure*}

Having the example network, two variants are needed: the physiological one and the pathological one. The physiological variant is the network as is while the pathological variant is the network plus a constitutive activation\slash inactivation of at least one of its nodes. For simplicity, and given the relatively small number of entities, only one is chosen: the retinoblastoma protein Rb for which a constitutive inactivation is applied. To implement this, the corresponding $f_{i}$ becomes:

\begin{IEEEeqnarray*}{c}
Rb(k+1)=0
\end{IEEEeqnarray*}
in $\boldsymbol{f}_{patho}$. Rb is chosen because its inactivation occurs in many cancers \cite{sherr2002rb}. Therefore, a network bearing a constitutive inactivation of it should be a relevant example of a pathological variant.

\subsection{Case study}
To illustrate the intended usage of the proposed methodology, the algorithm is used on a Boolean model of the Fanconi Anemia\slash Breast Cancer (FA\slash BRCA) pathway published by Rodriguez \textit{et al} \cite{rodriguez2012boolean}. This model is chosen for several reasons: i) two pathological conditions are studied: required for a case study of an \textit{in silico} target identification, ii) the physiological and pathological variants are clearly described: required by the algorithm, iii) it is nearly three times bigger than the example network: representative of a more comprehensive biological model while remaining computationally tractable, iv) synchronous updating is used: to date, the algorithm focuses on synchronous Boolean networks, and v) attractors are already interpreted in terms of phenotypes.

The FA\slash BRCA pathway is dedicated to DNA repair, more precisely to interstrand cross-link (ICL) removal. As expected with any DNA repair impairment, individuals suffering from FA\slash BRCA pathway malfunction are subjected to increased risk of cancer, such as in Fanconi anemia, a rare genetic disorder causing bone marrow failure, congenital abnormalities and increased risk of cancer \cite{schwartz2010susceptibility,auerbach2009fanconi,de2009genetic}. Rodriguez \textit{et al} propose a Boolean model comprising the FA\slash BRCA pathway and three types of DNA damages commonly observed in Fanconi anemia, namely ICLs, double-strand breaks (DSBs) and DNA adducts (ADDs). It should be noted that the ICL repair process creates DSBs and ADDs before removing them, thus leaving an undamaged DNA ready for the cell cycle. For a complete description of the model, see \cite{rodriguez2012boolean}. Below are the corresponding Boolean functions where, for the sake of readability, $x_{i}$ stands for $x_{i}(k)$ and $x_{i+}$ stands for $x_{i}(k+1)$:

\begin{footnotesize}
\begin{IEEEeqnarray*}{r C l}
ICL_{+}&=&ICL \land \lnot DSB\\
FANCM_{+}&=&ICL \land \lnot CHKREC\\
FAcore_{+}&=&FANCM \land (ATR \lor ATM) \land \lnot CHKREC\\
FANCD2I_{+}&=&FAcore \land ((ATM \lor ATR) \lor (H2AX \land DSB)) \land \lnot USP1\\
MUS81_{+}&=&ICL\\
FANCJBRCA1_{+}&=&(ICL \lor ssDNARPA) \land (ATM \lor ATR)\\
XPF_{+}&=&(MUS81 \land p53 \land \lnot (FAcore \land FANCD2I \land FAN1))\\
&&\lor (MUS81 \land \lnot FANCM)\\
FAN1_{+}&=&MUS81 \land FANCD2I\\
ADD_{+}&=&(ADD \lor (MUS81 \land (FAN1 \lor XPF))) \land \lnot PCNATLS\\
DSB_{+}&=&(DSB \lor FAN1 \lor XPF) \land \lnot (NHEJ \lor HRR)\\
PCNATLS_{+}&=&(ADD \lor (ADD \land FAcore)) \land \lnot (USP1 \lor FAN1)\\
MRN_{+}&=&DSB \land ATM \land \lnot ((KU \land FANCD2I) \lor RAD51 \lor CHKREC)\\
BRCA1_{+}&=&DSB \land (ATM \lor CHK2 \lor ATR) \land \lnot CHKREC\\
ssDNARPA_{+}&=&DSB \land ((FANCD2I \land FANCJBRCA1) \lor MRN) \land \lnot (RAD51 \lor KU)\\
FANCD1N_{+}&=&(FANCD2I \land ssDNARPA \land \lnot CHKREC)\\
&&\lor (ssDNARPA \land BRCA1)\\
RAD51_{+}&=&ssDNARPA \land FANCD1N \land \lnot CHKREC\\
HRR_{+}&=&DSB \land RAD51 \land FANCD1N \land BRCA1 \land \lnot CHKREC\\
USP1_{+}&=&((FANCD1N \land FANCD2I) \lor PCNATLS) \land \lnot FANCM\\
KU_{+}&=&DSB \land \lnot (MRN \lor FANCD2I \lor CHKREC)\\
DNAPK_{+}&=&(DSB \land KU) \land \lnot CHKREC\\
NHEJ_{+}&=&((DSB \land DNAPK \land KU) \land \lnot (ATM \land ATR))\\
&&\lor (\lnot ((FANCJBRCA1 \land ssDNARPA) \lor CHKREC)\\
&&\land DSB \land DNAPK \land XPF)\\
ATR_{+}&=&(ssDNARPA \lor FANCM \lor ATM) \land \lnot CHKREC\\
ATM_{+}&=&(ATR \lor DSB) \land \lnot CHKREC\\
p53_{+}&=&(((ATM \land CHK2) \lor (ATR \land CHK1)) \lor DNAPK) \land \lnot CHKREC\\
CHK1_{+}&=&(ATM \lor ATR \lor DNAPK) \land \lnot CHKREC\\
CHK2_{+}&=&(ATM \lor ATR \lor DNAPK) \land \lnot CHKREC\\
H2AX_{+}&=&DSB \land (ATM \lor ATR \lor DNAPK) \land \lnot CHKREC\\
CHKREC_{+}&=&((PCNATLS \lor NHEJ \lor HRR) \land \lnot DSB)\\
&&\lor ((\lnot ADD) \land (\lnot ICL) \land (\lnot DSB) \land \lnot CHKREC)
\end{IEEEeqnarray*}
\end{footnotesize}

The physiological variant is the FA\slash BRCA pathway model as is. To it, Rodriguez \textit{et al} propose two pathological variants, here called \textit{patho1} and \textit{patho2}, modeling two mutations involving genes of the FA\slash BRCA pathway. These mutations are observed in patients suffering from Fanconi anemia \cite{neveling2009genotype}. The first one involves the FANCA gene, corresponding to the $FAcore$ variable, and the second one involves the FANCD1\slash BRCA2 or FANCN\slash PALB2 gene, corresponding to the $FANCD1N$ variable. These mutations are of loss-of-function kind: to simulate them, the corresponding $f_{i}$ become

\begin{IEEEeqnarray*}{c}
FAcore(k+1)=0
\end{IEEEeqnarray*}
for FANCA gene null mutation in $\boldsymbol{f}_{patho1}$ and

\begin{IEEEeqnarray*}{c}
FANCD1N(k+1)=0
\end{IEEEeqnarray*}
for FANCD1\slash BRCA2 or FANCN\slash PALB2 gene null mutation in $\boldsymbol{f}_{patho2}$.

\subsection{Implementation}
The algorithm is implemented in Fortran compiled with GFortran\footnote{\url{http://www.gnu.org/software/gcc/fortran/}}. The code is available on GitHub\footnote{\url{https://github.com/}} at \url{https://github.com/arnaudporet/kali-targ}.

\section{Results}
In this section, results produced with the algorithm on the example network are exposed to illustrate how it works. Next, results produced with the algorithm on the case study are exposed to illustrate its intended applications for target identification.

\subsection{Results of step 1}
Owing to the relatively small size of the example network, $card\ D$ is set to $card\ S=1024$. Since $card\ D=card\ S$, all the attractors are found. Attractors are presented as matrices where, for an attractor of length $q$, lines correspond to the $x_{i}(k),\ k\in [\![1,q]\!]$, and columns to $\boldsymbol{x}(k)$. The algorithm returns the following attractors:

\begin{center}
$a_{1}=$
\begin{tabular}{l|l l l l l l l}
$CycD$&$1$&$1$&$1$&$1$&$1$&$1$&$1$\\
$Rb$&$0$&$0$&$0$&$0$&$0$&$0$&$0$\\
$E2F$&$0$&$1$&$1$&$1$&$0$&$0$&$0$\\
$CycE$&$0$&$0$&$1$&$1$&$1$&$0$&$0$\\
$CycA$&$0$&$0$&$0$&$1$&$1$&$1$&$1$\\
$p27$&$0$&$0$&$0$&$0$&$0$&$0$&$0$\\
$Cdc20$&$1$&$0$&$0$&$0$&$0$&$0$&$1$\\
$Cdh1$&$1$&$1$&$1$&$1$&$0$&$0$&$0$\\
$UbcH10$&$1$&$1$&$0$&$0$&$0$&$1$&$1$\\
$CycB$&$0$&$0$&$0$&$0$&$0$&$1$&$1$
\end{tabular}
\end{center}

\begin{center}
$a_{2}=$
\begin{tabular}{l|l}
$CycD$&$0$\\
$Rb$&$1$\\
$E2F$&$0$\\
$CycE$&$0$\\
$CycA$&$0$\\
$p27$&$1$\\
$Cdc20$&$0$\\
$Cdh1$&$1$\\
$UbcH10$&$0$\\
$CycB$&$0$
\end{tabular}
\end{center}
each of them attracting $50\%$ of the $\boldsymbol{x}\in S$ under $\boldsymbol{f}_{physio}$. Then, $A_{physio}=\lbrace a_{1},a_{2}\rbrace$ and corresponds to the results obtained by Faure \textit{et al}. In terms of phenotypes, $a_{1}$ corresponds to cell cycle whereas $a_{2}$ corresponds to quiescence.

\subsection{Results of steps 2 to 5}
Results of steps 2 to 5 are grouped since only the therapeutic bullets found in step 4 and classified in step 5 are returned. The algorithm is launched with $r_{min}=1$ and $r_{max}=2$. Due to the relatively small size of the example network, $max_{targ}$ and $max_{moda}$ are set to their maximum, namely $max_{targ}=45$ and $max_{moda}=4$. Consequently, all the possible bullets made of $1$ to $2$ targets are tested. The algorithm returns the following therapeutic bullets:

\begin{center}
\begin{tabular}{l l|l}
$+CycD$&&silver\\
$+CycD$&$-p27$&silver\\
$-CycD$&$+Rb$&silver\\
$+CycD$&$-Rb$&silver
\end{tabular}
\end{center}
where $+$ means therapeutic activation and $-$ means therapeutic inactivation. It should be noted that no golden bullets are found, an unsurprising result since they are rarer than silver ones.

Given these results, therapeutic activation of Rb, which is inactivated in the pathological variant, is not enough to remove the pathological attractors. Indeed, as seen in the third bullet, therapeutic activation of Rb must be accompanied by therapeutic inactivation of CycD. To better illustrate what is performed to obtain these therapeutic bullets, below is $A_{patho}$ without any bullet:

\begin{center}
$a_{3}=$
\begin{tabular}{l|l l l l l l l l}
$CycD$&$0$&$0$&$0$&$0$&$0$&$0$&$0$&$0$\\
$Rb$&$0$&$0$&$0$&$0$&$0$&$0$&$0$&$0$\\
$E2F$&$1$&$1$&$1$&$1$&$0$&$0$&$0$&$0$\\
$CycE$&$0$&$1$&$1$&$1$&$1$&$0$&$0$&$0$\\
$CycA$&$0$&$0$&$1$&$1$&$1$&$1$&$1$&$0$\\
$p27$&$1$&$1$&$1$&$0$&$0$&$0$&$0$&$0$\\
$Cdc20$&$0$&$0$&$0$&$0$&$0$&$0$&$1$&$1$\\
$Cdh1$&$1$&$1$&$1$&$1$&$0$&$0$&$0$&$1$\\
$UbcH10$&$1$&$0$&$0$&$0$&$0$&$1$&$1$&$1$\\
$CycB$&$0$&$0$&$0$&$0$&$0$&$1$&$1$&$0$
\end{tabular}
\end{center}

\begin{center}
$a_{4}=$
\begin{tabular}{l|l l l l l l l}
$CycD$&$1$&$1$&$1$&$1$&$1$&$1$&$1$\\
$Rb$&$0$&$0$&$0$&$0$&$0$&$0$&$0$\\
$E2F$&$1$&$1$&$1$&$0$&$0$&$0$&$0$\\
$CycE$&$0$&$1$&$1$&$1$&$0$&$0$&$0$\\
$CycA$&$0$&$0$&$1$&$1$&$1$&$1$&$0$\\
$p27$&$0$&$0$&$0$&$0$&$0$&$0$&$0$\\
$Cdc20$&$0$&$0$&$0$&$0$&$0$&$1$&$1$\\
$Cdh1$&$1$&$1$&$1$&$0$&$0$&$0$&$1$\\
$UbcH10$&$1$&$0$&$0$&$0$&$1$&$1$&$1$\\
$CycB$&$0$&$0$&$0$&$0$&$1$&$1$&$0$
\end{tabular}
\end{center}
each of these two attractors attracting $50\%$ of the $\boldsymbol{x}\in S$ under $\boldsymbol{f}_{patho}$. It should be noted that $a_{4}=a_{1}\in A_{physio}$: $a_{4}$ is a physiological attractor which also belongs to $A_{patho}$. Indeed, it is possible that the pathological variant exhibits physiological attractors: $A_{patho}$ is not the set containing exactly all the pathological attractors, it is the attractor set of the pathological variant, so $A_{physio}\cap A_{patho}\neq \emptyset$ is possible. However, $a_{3}\notin A_{physio}$: it is a pathological attractor and is what a therapeutic bullet, being golden or silver, is intended to remove.

Again to better illustrate what is performed to obtain these therapeutic bullets, below is $A_{patho}$ under the third bullet:

\begin{center}
\begin{tabular}{l|l}
$CycD$&$0$\\
$Rb$&$1$\\
$E2F$&$0$\\
$CycE$&$0$\\
$CycA$&$0$\\
$p27$&$1$\\
$Cdc20$&$0$\\
$Cdh1$&$1$\\
$UbcH10$&$0$\\
$CycB$&$0$
\end{tabular}
\end{center}
which is $a_{2}$. As expected for a therapeutic bullet, the pathological attractor $a_{3}$ is removed. However, the physiological attractor $a_{1}$ is not restored: the third therapeutic bullet is silver. Consequently, with this therapeutic bullet no cell cycle occurs and the only reachable phenotype is quiescence. While disabling the cell cycle of cancer cells is beneficial, disabling the cell cycle of healthy cells is not. As mentioned above, with silver bullets this is a matter of choice.

\subsection{Results of the case study}
With the case study, $card\ S=268\ 435\ 456$: computing attractors from all the $\boldsymbol{x} \in S$ becomes too demanding. Indeed, it should be recalled that the algorithm computes one attractor set per bullet, namely $A_{patho}$ under the tested bullet. Consequently, $card\ D$ is set to a more reasonable value: $card\ D=10\ 000$. Despite that $card\ D<card\ S$, it seems sufficient for the algorithm to find all the attractors, just as Rodriguez \textit{et al} whose the computation covers the whole state space. Below are the computed attractors:

\begin{itemize}
\item $A_{physio}=\lbrace a_{1}\rbrace$
\item $A_{patho1}=\lbrace a_{1}\rbrace$
\item $A_{patho2}=\lbrace a_{1},a_{2}\rbrace$, $a_{1}$ and $a_{2}$ attracting respectively $29.5\%$ and $70.5\%$ of the $\boldsymbol{x}\in D$ under $\boldsymbol{f}_{patho2}$
\end{itemize}
where

\begin{center}
$a_{1}=$
\begin{tabular}{l|l l}
$ICL$&$0$&$0$\\
$FANCM$&$0$&$0$\\
$FAcore$&$0$&$0$\\
$FANCD2I$&$0$&$0$\\
$MUS81$&$0$&$0$\\
$FANCJBRCA1$&$0$&$0$\\
$XPF$&$0$&$0$\\
$FAN1$&$0$&$0$\\
$ADD$&$0$&$0$\\
$DSB$&$0$&$0$\\
$PCNATLS$&$0$&$0$\\
$MRN$&$0$&$0$\\
$BRCA1$&$0$&$0$\\
$ssDNARPA$&$0$&$0$\\
$FANCD1N$&$0$&$0$\\
$RAD51$&$0$&$0$\\
$HRR$&$0$&$0$\\
$USP1$&$0$&$0$\\
$KU$&$0$&$0$\\
$DNAPK$&$0$&$0$\\
$NHEJ$&$0$&$0$\\
$ATR$&$0$&$0$\\
$ATM$&$0$&$0$\\
$p53$&$0$&$0$\\
$CHK1$&$0$&$0$\\
$CHK2$&$0$&$0$\\
$H2AX$&$0$&$0$\\
$CHKREC$&$0$&$1$
\end{tabular}\hfill
$a_{2}=$
\begin{tabular}{l|l}
$ICL$&$0$\\
$FANCM$&$0$\\
$FAcore$&$0$\\
$FANCD2I$&$0$\\
$MUS81$&$0$\\
$FANCJBRCA1$&$1$\\
$XPF$&$0$\\
$FAN1$&$0$\\
$ADD$&$0$\\
$DSB$&$1$\\
$PCNATLS$&$0$\\
$MRN$&$1$\\
$BRCA1$&$1$\\
$ssDNARPA$&$1$\\
$FANCD1N$&$0$\\
$RAD51$&$0$\\
$HRR$&$0$\\
$USP1$&$0$\\
$KU$&$0$\\
$DNAPK$&$0$\\
$NHEJ$&$0$\\
$ATR$&$1$\\
$ATM$&$1$\\
$p53$&$1$\\
$CHK1$&$1$\\
$CHK2$&$1$\\
$H2AX$&$1$\\
$CHKREC$&$0$
\end{tabular}
\end{center}
and their biological interpretation:

\begin{itemize}
\item $a_{1}$: cell cycle progression
\item $a_{2}$: cell cycle arrest
\end{itemize}

In physiological conditions, in case of a damaged DNA, cells repair it before performing the cell cycle, or die if repair fails. Such checkpoints enable cells to ensure genomic integrity by preventing damaged DNA to be replicated and then propagated \cite{bartek2007dna,ishikawa2006dna}. Otherwise, genetic instability may appears, potentially leading to cancer \cite{nakanishi2006genetic}. The results show that the physiological variant is able to ensure genomic integrity since its unique attractor is $a_{1}$ where $ICL=DSB=ADD=0$: DNA damages are repaired, if any, and the cell cycle can safely occur. Interestingly, the same physiological phenotype is computed for \textit{patho1} where $A_{patho1}=A_{physio}$. This suggests that cells bearing FANCA gene null mutation are nonetheless able to repair DNA. With \textit{patho2}, a pathological attractor appears: $a_{2}$, where $DSB=1$. This suggests that cells bearing FANCD1\slash BRCA2 or FANCN\slash PALB2 gene null mutation are unable to repair DSBs, explaining why $a_{2}$ corresponds to cell cycle arrest: DNA remains damaged. It should be noted that $a_{1}\in A_{patho2}$, suggesting that from some $\boldsymbol{x}_{0}$, that is under some conditions, such cells could be able to repair DNA. However, $a_{1}$ attracts only $29.5\%$ of the $\boldsymbol{x}\in D$ under $\boldsymbol{f}_{patho2}$, indicating that the pathological phenotype associated with $a_{2}$ is the most likely.

Altogether, according to the computed attractors and their phenotypic interpretation, and limited to the scope studied by the model of Rodriguez \textit{et al}, FANCA gene null mutation may not induce pathological phenotypes. However, with FANCD1\slash BRCA2 or FANCN\slash PALB2 gene null mutation, two phenotypes are predicted: a physiological one and a pathological one, the latter being the most likely. Therefore, the algorithm has to operate on \textit{patho2} to find bullets able to remove the pathological attractor $a_{2}$. By comprehensively testing all the bullets made of $1$ to $3$ targets, the algorithm returns the following results:

\begin{center}
\begin{tabular}{c|c|c}
&number of all possible bullets&number of therapeutic bullets\\ \hline
$r=1$&$56$&$1$ ($1.786\%$)\\
$r=2$&$1\ 512$&$20$ ($1.323\%$)\\
$r=3$&$26\ 208$&$191$ ($0.729\%$)
\end{tabular}
\end{center}
all therapeutic bullets being golden since $card\ A_{physio}=1$, as demonstrated in the \hyperref[theorem]{\textit{Theorem \ref*{theorem}}} page \pageref{theorem}. A list of the computed therapeutic bullets can be found in \hyperref[appendix2]{\textit{Appendix \ref*{appendix2}}} page \pageref{appendix2}. Given that in $a_{1}$, what the pathological variant is forced to reach by means of therapeutic bullets, almost all variables are valued at $0$, it is unsurprising that all targets in the computed therapeutic bullets have to be inhibited, that is set to $0$.

Below is the frequency of each node in the found therapeutic bullets:

\begin{center}
\begin{tabular}{c|c}
node&frequency in the found therapeutic bullets\\ \hline
$ATM$&$87.736\%$\\
$ICL$&$22.170\%$\\
$BRCA1$&$18.396\%$\\
$DSB$&$11.792\%$\\
$MRN$&$10.377\%$\\
$FANCM$&$9.906\%$\\
$ADD$&$9.906\%$\\
$FANCJBRCA1$&$9.434\%$\\
$ssDNARPA$&$9.434\%$\\
$FANCD1N$&$9.434\%$\\
$RAD51$&$9.434\%$\\
$HRR$&$9.434\%$\\
$USP1$&$9.434\%$\\
$CHK2$&$9.434\%$\\
$H2AX$&$9.434\%$\\
$FAcore$&$8.019\%$\\
$FANCD2I$&$8.019\%$\\
$FAN1$&$8.019\%$\\
$p53$&$8.019\%$\\
$CHK1$&$8.019\%$\\
$XPF$&$7.547\%$\\
$ATR$&$2.358\%$\\
$MUS81$&$0.943\%$\\
$PCNATLS$&$0.472\%$\\
$KU$&$0.472\%$\\
$DNAPK$&$0.472\%$\\
$NHEJ$&$0.472\%$\\
$CHKREC$&$0\%$
\end{tabular}
\end{center}
In this case study, DNA damages such as ICLs and DSBs are the pathological events. Unsurprisingly, the algorithm suggests them to be targeted: this is a logical consequence. However, DNA damages are not biomolecules in themselves and directly targeting them by means of drugs appears senseless. What is relevant are the biomolecules of the FA\slash BRCA pathway suggested as therapeutic targets. Interestingly, ATM dominates all the other candidates, predicting it to be a pivotal therapeutic target for the \textit{patho2} condition, namely the FA\slash BRCA pathway bearing FANCD1\slash BRCA2 or FANCN\slash PALB2 gene null mutation, as observed in Fanconi anemia.

\section{Conclusion}
\label{conclusion}
Under the assumption that attractors of dynamical systems and phenotypes of biological networks are linked when the former models the latter, the results show that the algorithm succeeds in performing the proposed \textit{in silico} target identification. It returns therapeutic bullets for a pathological variant of the mammalian cell cycle relevant in cancer and for a pathological variant modeling Fanconi anemia. Consequently, the algorithm can be used on other synchronous Boolean models of biological networks involved in diseases for \textit{in silico} target identification. It is intended to be of use in the early steps of target identification by providing an efficient way to identify candidate targets prior to costly wet lab experiments. However, both the physiological and pathological variants have to be known. This can constitute a limit of the proposed methodology since not all the pathophysiologies are known.

Target identification, whether performed \textit{in silico} or not, is a step belonging to a wider process: drug discovery. Having demonstrated a potential target \textit{in silico}, or even \textit{in vitro}, is far from having a medication. Further work and many years are necessary before obtaining a drug which is effective \textit{in vivo}. For example, and among other characteristics, such a drug has to be absorbed by the organism, has to reach its target and has to be non-toxic at therapeutic dosages. Furthermore, as with any \textit{in silico} evidence, it should be validated through wet lab experiments: there is a bridge to cross between theory and practice. Indeed, mathematical models approximate reality without reproducing it and theory must meet practice. For example, targeting ATM should restore a physiological running of the FA\slash BRCA pathway bearing FANCD1\slash BRCA2 or FANCN\slash PALB2 gene null mutation. However, if ATM operates in other pathways, targeting it may disturb them, thus potentially creating \textit{de novo} non-physiological conditions. Nevertheless, it is expected that the algorithm is of interest for target identification, notably by exploiting the inexpensiveness and predictive power of computational approaches to optimize the efficiency of costly wet lab experiments..

While finding Boolean network attractors of biological networks is not the purpose of this work, it is a necessary step which is in itself a challenging field of computational biology. Therefore, incorporating advances made in this field could be an interesting improvement. Another possible improvement could be to extend the algorithm for asynchronous Boolean networks since such models are likely to more accurately describe the dynamics of biological systems \cite{zhu2014asynchronous,liang2012stochastic}. Indeed, in biological systems, events may be subjected to stochasticity, may not occur simultaneously or may not belong to the same time scale, three points that a synchronous updating scheme does not take into account.

Yet another possible improvement could be to use a finer logic, such as multivalued logic. One of the main limitations of Boolean models is that variables can take only two values. In reality, things are not necessarily binary and variables should be able to take more values. Multivalued logic enables it in a discrete manner where variables can take a finite number of values between $0$ (false) and $1$ (true). For example, one can state that Rb is partly impaired rather than totally. Such a statement is not implementable with Boolean models but is with multivalued ones such as, for example, a three-valued logic where $true=1$, $moderate=0.5$ and $false=0$.

Finally, considering the basin of attraction of the pathological attractors could be an interesting extension of the criterion for selecting therapeutic bullets. In that case, the therapeutic potential of bullets could be assessed by estimating their ability at reducing the basin of the pathological attractors, as performed by Fumia \textit{et al} with their Boolean model of cancer pathways \cite{fumia2013boolean}. Such a criterion enables to consider the particular case where pathological attractors are removed, that is where pathological basins are reduced to the empty set, but also the other cases where pathological basins are not necessarily reduced to the empty set. Such a less restrictive selection of therapeutic bullets would enable to consider more targeting strategies for counteracting diseases.

\section{Additional improvements}
First of all, some additional definitions should be stated:
\begin{itemize}
\item \textbf{physiological state space}: The state space $S_{physio}$ of the physiological variant.
\item \textbf{pathological state space}: The state space $S_{patho}$ of the pathological variant.
\item \textbf{testing state space}: The state space $S_{test}$ of the pathological variant under the effect of a bullet.
\item \textbf{physiological basin}: The basin of attraction $B_{physio,i}$ of a physiological attractor $a_{physio,i}$.
\item \textbf{pathological basin}: The basin of attraction $B_{patho,i}$ of a pathological attractor $a_{patho,i}$.
\item \textbf{$n$-bullet}: A bullet made of $n$ targets.
\end{itemize}
Among the possible improvements mentioned in the \hyperref[conclusion]{\textit{Conclusion}} section page \pageref{conclusion}, two are done: extending the algorithm for multivalued logic and considering pathological basins for selecting therapeutic bullets.

\subsection{Multivalued logic}
\subsubsection{Introduction}
One of the main limitations of Boolean networks is that variables can take only two values, which can be quite simplistic. Depending on what variables model, such as activity level of enzymes or abundance of gene products, considering more than two possible levels should enable models to be more realistic. Without leaving the logic-based modeling formalism, one solution is to extend Boolean logic to multivalued logic \cite{rescher1968many}. As with Boolean logic, variables of multivalued logic are discrete, their value belonging to $[\![0;1]\!]$ where $0$ means false and $1$ means true. With Boolean logic, only $0$ and $1$ can be used to valuate variables. With multivalued logic, an arbitrary finite number $h$ of values in $[\![0;1]\!]$ can be used. Therefore, variables of multivalued logic can model more than only two possible levels, enabling models to be more realistic than those based on Boolean logic.

\subsubsection{Methods}
Boolean logic can be seen as a particular case of multivalued logic: it is a bivalued logic where variables take their value in $\lbrace 0,1\rbrace$. While Boolean operators work well in this case, multivalued logic requires suitable logical operators to be introduced. One solution is to use a mathematical formulation of the Boolean operators which also works with any multivalued logic, just as the Zadeh operators. These logical operators are a mathematical generalization of the Boolean ones proposed for fuzzy logic by its pioneer Lotfi Zadeh. Their mathematical formulation is as follow:

\begin{IEEEeqnarray*}{r C l}
AND(x,y)&=&min(x,y)\\
OR(x,y)&=&max(x,y)\\
NOT(x)&=&1-x
\end{IEEEeqnarray*}

With a $h$-valued logic, $card\ S=h^{n}$. If $h=2$ then this is the Boolean case, where $card\ S$ already raises computational difficulties. With an arbitrary $h>2$, $card\ S$ raises even more computational difficulties. The same applies to the testable bullets since there are $h^{r}$ possible $c_{moda}$ and then $(n!\cdot h^{r})/(r!\cdot (n-r)!)$ possible bullets. To illustrate how the algorithm works with a multivalued logic without overloading it, a $3$-valued logic is used with $\lbrace 0,0.5,1\rbrace$ as domain of value: $x_{i}(k)\in \lbrace 0,0.5,1\rbrace$. $0$ and $1$ have the same meaning as in Boolean logic, namely false and true respectively. $0.5$ is an intermediate truth degree which can be seen as an intermediate level of activity or abundance, depending on what is modeled. Consequently, $S=\lbrace 0,0.5,1\rbrace^{n}$ implying $\boldsymbol{x}_{0},\boldsymbol{x}(k)\in \lbrace 0,0.5,1\rbrace^{n}$, $D\subseteq \lbrace 0,0.5,1\rbrace^{n}$ and $moda_{i}\in \lbrace 0,0.5,1\rbrace$. Moreover, the Boolean operators of the $f_{i}$ are replaced by the Zadeh operators. This results in the following minor changes in the pseudocode of the algorithm described in \hyperref[appendix1]{\textit{Appendix \ref*{appendix1}}} page \pageref{appendix1}:

\begin{center}
\begin{tabular}{c|c|c}
line&Boolean logic&$h$-valued logic\\ \hline
$2$&$card\ D=min(card\ D,2^{n})$&$card\ D=min(card\ D,h^{n})$\\
$29$&$max_{moda}^{r}=min(max_{moda},2^{r})$&$max_{moda}^{r}=min(max_{moda},h^{r})$
\end{tabular}
\end{center}

How the algorithm works with this $3$-valued logic is illustrated with the example network, whose the logical functions become:

\begin{footnotesize}
\begin{IEEEeqnarray*}{r C l}
CycD_{+}&=&CycD\\
Rb_{+}&=&max(min(1-CycD,1-CycE,1-CycA,1-CycB),\\
&&min(p27,1-CycD,1-CycB))\\
E2F_{+}&=&max(min(1-Rb,1-CycA,1-CycB),min(p27,1-Rb,1-CycB))\\
CycE_{+}&=&min(E2F,1-Rb)\\
CycA_{+}&=&max(min(E2F,1-Rb,1-Cdc20,1-min(Cdh1,UbcH10)),\\
&&min(CycA,1-Rb,1-Cdc20,1-min(Cdh1,UbcH10)))\\
p27_{+}&=&max(min(1-CycD,1-CycE,1-CycA,1-CycB),\\
&&min(p27,1-min(CycE,CycA),1-CycB,1-CycD))\\
Cdc20_{+}&=&CycB\\
Cdh1_{+}&=&max(min(1-CycA,1-CycB),Cdc20,min(p27,1-CycB))\\
UbcH10_{+}&=&max(1-Cdh1,min(Cdh1,UbcH10,max(Cdc20,CycA,CycB)))\\
CycB_{+}&=&min(1-Cdc20,1-Cdh1)
\end{IEEEeqnarray*}
\end{footnotesize}
which is $\boldsymbol{f}_{physio}$. For $\boldsymbol{f}_{patho}$, owing to this $3$-valued logic, a constitutive but partial inactivation of Rb is simulated. The corresponding $f_{i}$ becomes:

\begin{IEEEeqnarray*}{c}
Rb_{+}=0.5
\end{IEEEeqnarray*}
in $\boldsymbol{f}_{patho}$.

\subsubsection{Results}
With the example network modeled by this $3$-valued logic, $card\ S=59\ 049$, which remains computationally tractable. Therefore, $card\ D=card\ S$: all the attractors are found. With the physiological variant, the algorithm returns:

\begin{IEEEeqnarray*}{c}
A_{physio}=\lbrace a_{physio1},a_{physio2},a_{physio3},a_{physio4},a_{physio5},a_{physio6}\rbrace
\end{IEEEeqnarray*}
where

\begin{center}
$a_{physio1}=$
\begin{tabular}{l|l}
$CycD$&$0$\\
$Rb$&$0.5$\\
$E2F$&$0.5$\\
$CycE$&$0.5$\\
$CycA$&$0.5$\\
$p27$&$0.5$\\
$Cdc20$&$0.5$\\
$Cdh1$&$0.5$\\
$UbcH10$&$0.5$\\
$CycB$&$0.5$
\end{tabular}\hfill
$a_{physio2}=$
\begin{tabular}{l|l}
$CycD$&$0$\\
$Rb$&$1$\\
$E2F$&$0$\\
$CycE$&$0$\\
$CycA$&$0$\\
$p27$&$1$\\
$Cdc20$&$0$\\
$Cdh1$&$1$\\
$UbcH10$&$0$\\
$CycB$&$0$
\end{tabular}
\end{center}

\begin{center}
$a_{physio3}=$
\begin{tabular}{l|l}
$CycD$&$0.5$\\
$Rb$&$0.5$\\
$E2F$&$0.5$\\
$CycE$&$0.5$\\
$CycA$&$0.5$\\
$p27$&$0.5$\\
$Cdc20$&$0.5$\\
$Cdh1$&$0.5$\\
$UbcH10$&$0.5$\\
$CycB$&$0.5$
\end{tabular}\hfill
$a_{physio4}=$
\begin{tabular}{l|l}
$CycD$&$1$\\
$Rb$&$0$\\
$E2F$&$0.5$\\
$CycE$&$0.5$\\
$CycA$&$0.5$\\
$p27$&$0$\\
$Cdc20$&$0.5$\\
$Cdh1$&$0.5$\\
$UbcH10$&$0.5$\\
$CycB$&$0.5$
\end{tabular}
\end{center}

\begin{center}
$a_{physio5}=$
\begin{tabular}{l|l l}
$CycD$&$0$&$0$\\
$Rb$&$0.5$&$1$\\
$E2F$&$0$&$0.5$\\
$CycE$&$0$&$0$\\
$CycA$&$0$&$0$\\
$p27$&$0.5$&$1$\\
$Cdc20$&$0.5$&$0$\\
$Cdh1$&$0.5$&$1$\\
$UbcH10$&$0.5$&$0.5$\\
$CycB$&$0$&$0.5$
\end{tabular}
\end{center}

\begin{center}
$a_{physio6}=$
\begin{tabular}{l|l l l l l l l}
$CycD$&$1$&$1$&$1$&$1$&$1$&$1$&$1$\\
$Rb$&$0$&$0$&$0$&$0$&$0$&$0$&$0$\\
$E2F$&$0$&$1$&$1$&$1$&$0$&$0$&$0$\\
$CycE$&$0$&$0$&$1$&$1$&$1$&$0$&$0$\\
$CycA$&$0$&$0$&$0$&$1$&$1$&$1$&$1$\\
$p27$&$0$&$0$&$0$&$0$&$0$&$0$&$0$\\
$Cdc20$&$1$&$0$&$0$&$0$&$0$&$0$&$1$\\
$Cdh1$&$1$&$1$&$1$&$1$&$0$&$0$&$0$\\
$UbcH10$&$1$&$1$&$0$&$0$&$0$&$1$&$1$\\
$CycB$&$0$&$0$&$0$&$0$&$0$&$1$&$1$
\end{tabular}
\end{center}
and their corresponding basin of attraction:

\begin{center}
\begin{tabular}{c|c}
$a_{i}$&$B_{i}$ (in \% of $card\ S_{physio}$)\\ \hline
$a_{physio1}$&$9.9\%$\\
$a_{physio2}$&$20.1\%$\\
$a_{physio3}$&$33.3\%$\\
$a_{physio4}$&$24.5\%$\\
$a_{physio5}$&$3.4\%$\\
$a_{physio6}$&$8.8\%$
\end{tabular}
\end{center}
It should be noted that $a_{physio2}$ and $a_{physio6}$ are the two physiological attractors found in the Boolean case. Indeed, since $\lbrace 0,1\rbrace \subset \lbrace 0,0.5,1\rbrace$ and since the Zadeh operators also work with Boolean logic, Boolean logic is included in this three-valued logic. This means that results obtainable with the former are also obtainable with the latter. With the pathological variant, where Rb is constitutively but partially inactivated, the algorithm returns:

\begin{IEEEeqnarray*}{c}
A_{patho}=\lbrace a_{physio1},a_{physio3},a_{patho1}\rbrace
\end{IEEEeqnarray*}
where

\begin{center}
$a_{patho1}=$
\begin{tabular}{l|l}
$CycD$&$1$\\
$Rb$&$0.5$\\
$E2F$&$0.5$\\
$CycE$&$0.5$\\
$CycA$&$0.5$\\
$p27$&$0$\\
$Cdc20$&$0.5$\\
$Cdh1$&$0.5$\\
$UbcH10$&$0.5$\\
$CycB$&$0.5$
\end{tabular}
\end{center}
ans their corresponding basin of attraction:

\begin{center}
\begin{tabular}{c|c}
$a_{i}$&$B_{i}$ (in \% of $card\ S_{patho}$)\\ \hline
$a_{physio1}$&$33.3\%$\\
$a_{physio3}$&$33.3\%$\\
$a_{patho1}$&$33.3\%$
\end{tabular}
\end{center}
Only $a_{physio1}$ and $a_{physio3}$ remain, while $a_{patho1}$ appears and is what therapeutic bullets have to remove from $S_{test}$.

As in the Boolean case, the algorithm is launched with $r_{min}=1$ and $r_{max}=2$. $max_{targ}$ and $max_{moda}$ are set to their maximum, namely $max_{targ}=45$ and $max_{moda}=9$: all the $1,2$-bullets are tested. The algorithm returns the following therapeutic bullets:

\begin{center}
\begin{tabular}{l l|l}
$CycD[0]$&&silver\\
$CycD[0.5]$&&silver\\
$CycD[0]$&$Rb[0.5]$&silver\\
$CycD[0.5]$&$Rb[0.5]$&silver\\
$CycD[1]$&$Rb[0]$&silver\\
$CycD[0]$&$E2F[0.5]$&silver\\
$CycD[0.5]$&$E2F[0.5]$&silver\\
$CycD[0]$&$CycE[0.5]$&silver\\
$CycD[0.5]$&$CycE[0.5]$&silver\\
$CycD[0]$&$CycA[0.5]$&silver\\
$CycD[0.5]$&$CycA[0.5]$&silver\\
$CycD[0]$&$p27[0.5]$&silver\\
$CycD[0.5]$&$p27[0.5]$&silver\\
$CycD[0]$&$Cdc20[0.5]$&silver\\
$CycD[0.5]$&$Cdc20[0.5]$&silver\\
$CycD[0]$&$Cdh1[0.5]$&silver\\
$CycD[0.5]$&$Cdh1[0.5]$&silver\\
$CycD[0]$&$UbcH10[0.5]$&silver\\
$CycD[0.5]$&$UbcH10[0.5]$&silver\\
$CycD[0]$&$CycB[0.5]$&silver\\
$CycD[0.5]$&$CycB[0.5]$&silver
\end{tabular}
\end{center}
where $X[y]$ means that the node $X\in V$ has to be set to the value $y\in \lbrace 0,0.5,1\rbrace$. For example, the third therapeutic bullet is made of the targets CycD and Rb whose the value has to be set to $0$ and $0.5$ respectively. As in the Boolean case, it should be noted that no golden bullets are found, an unsurprising result since they are rarer than silver ones.

\subsubsection{Conclusion}
The algorithm is now extended for multivalued logic, which includes the Boolean one. This means that the previous strictly Boolean version of the algorithm is included in this new one. Moreover, allowing variables to take an arbitrary finite number of values should enable to more accurately model biological processes and produce more fine-tuned therapeutic bullets. However, this accuracy and fine-tuning are at the cost of an increased computational requirement. Indeed, in this work, the computational requirement essentially depends on the cardinality of the state space, which itself depends on the size of the model and the used multivalued logic. Therefore, the size of the model and the used multivalued logic should be balanced: the smaller the model is, the more variables should be finely valued. For example, for a fine therapeutic investigation, the model should only contain the essential and specific pieces of the pathophysiology of interest, modeled by a finely valued logic. On the other hand, for a gross therapeutic investigation, an exhaustive model could be used but modeled by a coarse-grained logic, such as the Boolean one. Finally, it should be noted that the ultimate multivalued logic is the infinitely valued one, which is fuzzy logic. With fuzzy logic, the whole $[0;1] \subset \mathbb{R}$ is used to valuate variables, which should bring the best accuracy for the qualitative modeling formalism \cite{poret2014logic}.

\subsection{Therapeutic bullet assessment}
\subsubsection{Introduction}
Till now, the algorithm requires therapeutic bullets to remove all the pathological attractors from the pathological state space, so that the pathological variant no longer exhibits pathological phenotypes. This criterion for selecting therapeutic bullets can appear somewhat drastic since it is all or nothing. A less strict criterion should enable to consider more targeting strategies, and then more possibilities for counteracting diseases. Certainly, a less restrictive criterion could bring less ``powerful'' therapeutic bullets, but being too demanding potentially leads to no results and loss of nonetheless interesting findings.

The therapeutic potential of bullets could be assessed by estimating their ability at reducing the cardinality of the pathological basins. This is a more permissive criterion since therapeutic bullets no longer have to necessarily remove the pathological attractors. Reducing the cardinality of a pathological basin renders the corresponding pathological attractor less reachable, and then the associated pathological phenotype less likely. This new criterion includes the previous one: removing an attractor means reducing its basin of attraction to the empty set. Therefore, therapeutic bullets obtainable with the previous criterion are also obtainable with this new one.

\subsubsection{Methods}
To implement this new criterion for selecting therapeutic bullets, the algorithm considers a bullet as therapeutic if it increases $card\ \bigcup B_{physio,i}$ in $S_{test}$ without creating \textit{de novo} attractors. Since the attractors are either physiological or pathological, increasing $card\ \bigcup B_{physio,i}$ is equivalent to decreasing $card\ \bigcup B_{patho,i}$. The goal of this new criterion is to increase the physiological part of $S_{test}$, which is equivalent to decreasing its pathological part. Consequently, a pathological variant treated by such a therapeutic bullet tends to, but not necessarily reaches, an overall physiological behavior. However, as with the previous criterion, it does not ensure that the $a_{physio,i}$ are preserved\slash restored. \textit{A fortiori}, it does not ensure that the $B_{physio,i}$ in $S_{test}$ are as in $S_{physio}$. This means that it does not ensure that the reachability of the $a_{physio,i}$ is preserved\slash restored. Nevertheless, as with the previous criterion, this is a matter of choice between a therapeutic bullet or not. To assist this choice and better visualize the effects of therapeutic bullets, the $card\ B_{physio,i}$ and $card\ B_{patho,i}$ in $S_{test}$ are computed.

Implementing this new criterion for selecting therapeutic bullets is a major change. Therefore, the pseudocode of the algorithm presented in \hyperref[appendix1]{\textit{Appendix \ref*{appendix1}}} page \pageref{appendix1} is rewritten and structured into three modules:
\begin{itemize}
\item the $compute\_A$ function, which computes $A_{physio}$ or $A_{patho}$, depending on which of the $\boldsymbol{f}_{physio}$ or $\boldsymbol{f}_{patho}$ is passed
\item the $compute\_cover$ function, which for two attractor sets $A_{1}$ and $A_{2}$ computes the covering of $S_{2}$ by $\bigcup B_{1,i}$, expressed in percents of $card\ S_{2}$
\item the $compute\_T$ function, which computes a set $T$ of therapeutic bullets
\end{itemize}
Below is the corresponding pseudocode:\newline

\noindent \textbf{function} $A=compute\_A(\boldsymbol{f},c_{targ},c_{moda},D,V)$
\begin{algorithmic}[1]
\STATE $A=\lbrace \rbrace$
\FOR{$\boldsymbol{x}_{0}\in D$}
    \STATE $k=1$
    \STATE $\boldsymbol{x}(k)=\boldsymbol{x}_{0}$
    \WHILE{\TRUE}
        \STATE $\boldsymbol{x}(k+1)=\boldsymbol{f}(\boldsymbol{x}(k))$
        \FOR{$targ_{i}\in c_{targ}$}
            \FOR{$v_{j}\in V$}
                \IF{$v_{j}=targ_{i}$}
                    \STATE $x_{j}(k+1)=moda_{i}$
                \ENDIF
            \ENDFOR
        \ENDFOR
        \IF{$\exists k'\in [\![1,k]\!]\colon \boldsymbol{x}(k+1)=\boldsymbol{x}(k')$}
            \STATE $a_{i}.seq=(\boldsymbol{x}(k'),\dotsc,\boldsymbol{x}(k))$
            \IF{$\exists a_{j}\in A\colon a_{i}.seq=a_{j}.seq$}
                \STATE $a_{j}.freq=a_{j}.freq+1$
            \ELSE
                \STATE $a_{i}.freq=1$
                \STATE $A=A\cup \lbrace a_{i}\rbrace$
            \ENDIF
            \STATE \textbf{break}
        \ENDIF
        \STATE $k=k+1$
    \ENDWHILE
\ENDFOR
\FOR{$a\in A$}
    \STATE $a.freq=a.freq\cdot 100/card\ D$
\ENDFOR
\RETURN $A$
\end{algorithmic}
\textbf{end function}\\
For $A_{physio}$ and $A_{patho}$, which are computed without bullet, the empty bullet $((),())$ has to be passed. The $a_{i}$ are represented as structures composed of two fields: $a_{i}.seq$, which is the sequence of $a_{i}$ (line 15), and $a_{i}.freq$, which is the corresponding $card\ B_{i}$, expressed in percents of $card\ D$. To compute $a_{i}.freq$, the algorithm counts the number of times $a_{i}$ is reached (line 19 if this is the first time $a_{i}$ is reached, line 17 otherwise) and then, once all the $x_{0}\in D$ are computed, translates $a_{i}.freq$ in percents of $card\ D$ (line 28).\newline

\noindent \textbf{function} $y=compute\_cover(A_{1},A_{2})$
\begin{algorithmic}[1]
\STATE $cover=0$
\FOR{$a_{1}\in A_{1}$}
    \IF{$\exists a_{2}\in A_{2}\colon a_{1}.seq=a_{2}.seq$}
        \STATE $cover=cover+a_{2}.freq$
    \ENDIF
\ENDFOR
\RETURN $cover$
\end{algorithmic}
\textbf{end function}\\
If $a_{1}$ also belongs to $A_{2}$ (line 3) then the cardinality of its basin in $S_{2}$ is used to compute the covering of $S_{2}$ by $\bigcup B_{1,i}$ (line 4).\newline

\noindent \textbf{function} $T=compute\_T(\boldsymbol{f}_{physio},\boldsymbol{f}_{patho},r_{min},r_{max},max_{targ},max_{moda},\\ max_{D},h,V)$
\begin{algorithmic}[1]
\STATE $n=card\ V$
\STATE $D=\lbrace \rbrace$
\WHILE{$card\ D<max_{D}$}
    \STATE generate $x_{0}\notin D$
    \STATE $D=D\cup \lbrace x_{0}\rbrace$
\ENDWHILE
\STATE $A_{physio}=compute\_A(\boldsymbol{f}_{physio},(),(),D,V)$
\STATE $A_{patho}=compute\_A(\boldsymbol{f}_{patho},(),(),D,V)$
\STATE $T=\lbrace \rbrace$
\STATE $cover_{patho}=compute\_cover(A_{physio},A_{patho})$
\FOR{$r\in [\![r_{min},r_{max}]\!]$}
    \STATE $C_{targ}=\lbrace \rbrace$
    \STATE $C_{moda}=\lbrace \rbrace$
    \WHILE{$card\ C_{targ}<min(max_{targ},n!/(r!\cdot (n-r)!))$}
        \STATE generate $c_{targ}\notin C_{targ}$
        \STATE $C_{targ}=C_{targ}\cup \lbrace c_{targ}\rbrace$
    \ENDWHILE
    \WHILE{$card\ C_{moda}<min(max_{moda},h^{r})$}
        \STATE generate $c_{moda}\notin C_{moda}$
        \STATE $C_{moda}=C_{moda}\cup \lbrace c_{moda}\rbrace$
    \ENDWHILE
    \FOR{$c_{targ}\in C_{targ}$}
        \FOR{$c_{moda}\in C_{moda}$}
            \STATE $A_{test}=compute\_A(\boldsymbol{f}_{patho},c_{targ},c_{moda},D,V)$
            \IF{$A_{test}\subseteq A_{physio}\cup A_{patho}$}
                \STATE $cover_{test}=compute\_cover(A_{physio},A_{test})$
                \IF{$cover_{test}>cover_{patho}$}
                    \STATE $T=T\cup \lbrace (c_{targ},c_{moda})\rbrace$
                \ENDIF
            \ENDIF
        \ENDFOR
    \ENDFOR
\ENDFOR
\RETURN $T$
\end{algorithmic}
\textbf{end function}\\
$max_{D}$ is the desired $card\ D$ and $h$ is the cardinality of the domain of value, which depends on the used multivalued logic. $A_{physio}$ and $A_{patho}$ are computed without bullet, so the empty bullet $((),())$ is passed to $compute\_A$ (lines 7 and 8). $cover_{patho}$ is the covering of $S_{patho}$ by $\bigcup B_{physio,i}$ (line 10) and $cover_{test}$ is the covering of $S_{test}$ by $\bigcup B_{physio,i}$ (line 26). $A_{test}$ is the pathological attractor set under the effect of the tested bullet (line 24). A therapeutic bullet has to avoid the appearance of \textit{de novo} attractors (line 25) and has to increase the covering of $S_{test}$ by $\bigcup B_{physio,i}$ (line 27).

\subsubsection{Results}
This new criterion for selecting therapeutic bullets is illustrated on the case study modeled by Boolean logic: $h=2$. Since \textit{patho1} has the same attractor set than the physiological variant, only \textit{patho2} is computed. As previously, wholly computing $S$ is too demanding. Therefore, $D$ is intended to have a reasonable cardinality: $max_{D}=100\ 000$. All the $1,2$-bullets are tested: $r_{min}=1$, $r_{max}=2$, $max_{targ}=378$ and $max_{moda}=4$. However, their therapeutic potential is no longer expressed as golden or silver but by their gain. It is displayed as follow: $x\% \to y\%$ where $card\ \bigcup B_{physio,i}=x\%$ in $S_{patho}$ and $y\%$ in $S_{test}$. Consequently, in order to increase the physiological part of $S_{test}$, a therapeutic bullet has to make $y>x$. The $card\ B_{physio,i}$ and $card\ B_{patho,i}$ in $S_{test}$ are also computed and expressed in percents of $card\ S_{test}$. The algorithm returns $59$ therapeutic bullets whose the list can be found in \hyperref[appendix3]{\textit{Appendix \ref*{appendix3}}} page \pageref{appendix3}.

A therapeutic bullet as defined by the previous criterion, that is which removes all the $a_{patho,i}$ from $S_{test}$, makes \textit{de facto} $card\ \bigcup B_{physio,i}=100\%$ in $S_{test}$. As already mentioned, the previous criterion is included in this new one: therapeutic bullets obtainable with the former are also obtainable with the latter. This can be checked by noting that the $1,2$-therapeutic bullets found with the previous criterion are also found with this new one.

With this case study, $A_{physio}=\lbrace a_{physio1}\rbrace$, so $\bigcup B_{physio,i}=B_{physio1}$. Therefore, in this particular case where $card\ A_{physio}=1$, therapeutic bullets have to increase $card\ B_{physio1}$ in $S_{test}$. It should be recalled that $card\ B_{physio1}=29.4\%$ in $S_{patho}$, so therapeutic bullets have to make $card\ B_{physio1}>29.4\%$ in $S_{test}$. For example, below are the computed $1$-therapeutic bullets:

\begin{center}
\begin{tabular}{l|r c r|r r}
\multicolumn{1}{c}{bullet}&\multicolumn{3}{|c|}{gain}&$B_{physio1}$&$B_{patho1}$\\ \hline
$-FANCM$&$29.4\%$&$\to$&$44.6\%$&$44.6\%$&$55.4\%$\\
$-FANCD2I$&$29.4\%$&$\to$&$30.4\%$&$30.4\%$&$69.6\%$\\
$-XPF$&$29.4\%$&$\to$&$46.2\%$&$46.2\%$&$53.8\%$\\
$-FAN1$&$29.4\%$&$\to$&$32.9\%$&$32.9\%$&$67.1\%$\\
$-ATM$&$29.4\%$&$\to$&$100\%$&$100\%$&$0\%$
\end{tabular}
\end{center}

$-ATM$ is a therapeutic bullet also found with the previous criterion since it removes all the $a_{patho,i}$, namely $a_{patho1}$, from $S_{test}$. However, the other four therapeutic bullets are only obtainable with this new criterion since they do not remove $a_{patho1}$ from $S_{test}$. Nevertheless, as therapeutic bullets, they increase $card\ B_{physio1}$ in $S_{test}$. This highlight the ability of this new criterion to unravel more therapeutic bullets of varying therapeutic potential, thus opening the way for more targeting strategies of varying theoretical efficacy. Of course, therapeutic bullets of poor potential are also unraveled, such as $-FANCD2I$ which only increases $card\ B_{physio1}$ from $29.4\%$ in $S_{patho}$ to $30.4\%$ in $S_{test}$. However, \textit{in silico} tools should not restrict their predictions to only those exhibiting a high theoretical potency since predicted does not necessarily mean true. Indeed, a prediction of apparently poor interest can reveal itself of great interest in practice, and \textit{vice versa}.

\subsubsection{Conclusion}
The algorithm now uses a new criterion for selecting therapeutic bullets which brings a wider range of targeting strategies intended to push pathological behaviors toward physiological ones with varying predicted efficacy. Moreover, no information is lost from the previous criterion since results obtainable with the previous one are also obtainable with this new one. This new criterion is based on a more permissive assumption stating that reducing the reachability of pathological attractors is therapeutic. For an \textit{in silico} tool such as this algorithm, a more permissive assumption is important since theoretical findings have to outlive the bottleneck separating prediction to reality. With a too strict criterion, the risk of highlighting too few candidate targets or to miss some interesting ones is too hight. Indeed, results predicted \textit{in silico} have to be validated \textit{in vitro} and\slash or \textit{in vivo}. Therefore, requiring only perfect predictions such as therapeutic bullets removing all the pathological attractors could left insufficient results after validation. All the more so that a prediction of apparently poor interest could reveal itself as an insight of great interest and \textit{vice versa}.

This new criterion for selecting therapeutic bullets also brings a finer assessment of their potential since all the percentages between $card\ \bigcup B_{physio,i}$ in $S_{patho}$ and $100\%$ are considered. With the previous criterion, the only therapeutic potential is $card\ \bigcup B_{physio,i}=100\%$ in $S_{test}$, thus reducing the assessment to therapeutic or not. However, things are not necessarily black or white but rather a continuum of gray nuances, so the assessment of therapeutic potentials should be nuanced too, just as it is now.

\newpage

\section{Appendices}
\subsection{Appendix 1}
\label{appendix1}
The algorithm in one block of pseudocode.

\begin{algorithmic}[1]
\STATE \textbf{prompt} $card\ D$
\STATE $card\ D=min(card\ D,2^{n})$
\STATE generate $D\subseteq S$
\STATE $H=\lbrace \rbrace$
\STATE $A_{physio}=\lbrace \rbrace$
\FOR{$x_{0}\in D$}
    \STATE $k=1$
    \STATE $\boldsymbol{x}(k)=x_{0}$
    \WHILE{\TRUE}
        \IF{$\exists w\in H\colon \boldsymbol{x}(k)\in w$}
            \STATE \textbf{break}
        \ENDIF
        \STATE $\boldsymbol{x}(k+1)=\boldsymbol{f}_{physio}(\boldsymbol{x}(k))$
        \IF{$\exists k'\in [\![1,k]\!]\colon \boldsymbol{x}(k+1)=\boldsymbol{x}(k')$}
            \STATE $A_{physio}=A_{physio}\cup \lbrace (\boldsymbol{x}(k'),\dotsc,\boldsymbol{x}(k))\rbrace$
            \STATE \textbf{break}
        \ENDIF
        \STATE $k=k+1$
    \ENDWHILE
    \STATE $H=H\cup \lbrace (\boldsymbol{x}(1),\dotsc,\boldsymbol{x}(k))\rbrace$
\ENDFOR
\RETURN $A_{physio}$
\STATE \textbf{prompt} $r_{min},r_{max},max_{targ},max_{moda}$
\STATE $r_{max}=min(r_{max},n)$
\STATE $golden\_set=\lbrace \rbrace$
\STATE $silver\_set=\lbrace \rbrace$
\FOR{$r\in [\![r_{min},r_{max}]\!]$}
    \STATE $max_{targ}^{r}=min(max_{targ},n!/(r!\cdot (n-r)!))$
    \STATE $max_{moda}^{r}=min(max_{moda},2^{r})$
    \STATE $C_{targ}=\lbrace \rbrace$
    \STATE $C_{moda}=\lbrace \rbrace$
    \WHILE{$card\ C_{targ}<max_{targ}^{r}$}
        \STATE generate $c_{targ}\notin C_{targ}$
        \STATE $C_{targ}=C_{targ}\cup \lbrace c_{targ}\rbrace$
    \ENDWHILE
    \WHILE{$card\ C_{moda}<max_{moda}^{r}$}
        \STATE generate $c_{moda}\notin C_{moda}$
        \STATE $C_{moda}=C_{moda}\cup \lbrace c_{moda}\rbrace$
    \ENDWHILE
    \FOR{$c_{targ}\in C_{targ}$}
        \FOR{$c_{moda}\in C_{moda}$}
            \STATE $H=\lbrace \rbrace$
            \STATE $A_{patho}=\lbrace \rbrace$
            \FOR{$x_{0}\in D$}
                \STATE $k=1$
                \STATE $\boldsymbol{x}(k)=x_{0}$
                \WHILE{\TRUE}
                    \IF{$\exists w\in H\colon \boldsymbol{x}(k)\in w$}
                        \STATE \textbf{break}
                    \ENDIF
                    \STATE $\boldsymbol{x}(k+1)=\boldsymbol{f}_{patho}(\boldsymbol{x}(k))$
                    \FOR{$targ_{i}\in c_{targ}$}
                        \FOR{$v_{j}\in V$}
                            \IF{$v_{j}=targ_{i}$}
                                \STATE $x_{j}(k+1)=moda_{i}$
                            \ENDIF
                        \ENDFOR
                    \ENDFOR
                    \IF{$\exists k'\in [\![1,k]\!]\colon \boldsymbol{x}(k+1)=\boldsymbol{x}(k')$}
                        \STATE $A_{patho}=A_{patho}\cup \lbrace (\boldsymbol{x}(k'),\dotsc,\boldsymbol{x}(k))\rbrace$
                        \STATE \textbf{break}
                    \ENDIF
                    \STATE $k=k+1$
                \ENDWHILE
                \STATE $H=H\cup \lbrace (\boldsymbol{x}(1),\dotsc,\boldsymbol{x}(k))\rbrace$
            \ENDFOR
            \IF{$A_{patho}\subseteq A_{physio}$}
                \IF{$A_{patho}=A_{physio}$}
                    \STATE $golden\_set=golden\_set\cup \lbrace (c_{targ},c_{moda})\rbrace$
                \ELSE
                    \STATE $silver\_set=silver\_set\cup \lbrace (c_{targ},c_{moda})\rbrace$
                \ENDIF
            \ENDIF
        \ENDFOR
    \ENDFOR
\ENDFOR
\RETURN $golden\_set,silver\_set$
\end{algorithmic}

\newpage

\subsection{Appendix 2}
\label{appendix2}
Therapeutic bullets found for the case study.

\begin{footnotesize}
\begin{center}
\begin{tabular}{l l l|l}
$-ATM$&&&golden\\
$-ATM$&$-CHK2$&&golden\\
$-HRR$&$-ATM$&&golden\\
$-ssDNARPA$&$-ATM$&&golden\\
$-BRCA1$&$-ATM$&&golden\\
$-MRN$&$-ATM$&&golden\\
$-FAN1$&$-ATM$&&golden\\
$-ICL$&$-DSB$&&golden\\
$-FAcore$&$-ATM$&&golden\\
$-USP1$&$-ATM$&&golden\\
$-ATM$&$-H2AX$&&golden\\
$-ADD$&$-ATM$&&golden\\
$-RAD51$&$-ATM$&&golden\\
$-XPF$&$-ATM$&&golden\\
$-FANCM$&$-ATM$&&golden\\
$-FANCD1N$&$-ATM$&&golden\\
$-ATM$&$-CHK1$&&golden\\
$-ICL$&$-ATM$&&golden\\
$-ATM$&$-p53$&&golden\\
$-FANCJBRCA1$&$-ATM$&&golden\\
$-FANCD2I$&$-ATM$&&golden\\
$-ICL$&$-FANCD1N$&$-ATM$&golden\\
$-ICL$&$-FAcore$&$-DSB$&golden\\
$-BRCA1$&$-USP1$&$-ATM$&golden\\
$-BRCA1$&$-ssDNARPA$&$-ATM$&golden\\
$-BRCA1$&$-ATM$&$-CHK1$&golden\\
$-ADD$&$-ATM$&$-H2AX$&golden\\
$-FAN1$&$-MRN$&$-ATM$&golden\\
$-ATM$&$-CHK2$&$-H2AX$&golden\\
$-ICL$&$-DSB$&$-MRN$&golden\\
$-XPF$&$-MRN$&$-ATM$&golden\\
$-FAcore$&$-FANCD2I$&$-ATM$&golden\\
$-FANCM$&$-ATM$&$-CHK2$&golden\\
$-RAD51$&$-ATM$&$-p53$&golden\\
$-ICL$&$-ssDNARPA$&$-ATM$&golden\\
$-FANCM$&$-ATR$&$-ATM$&golden\\
$-RAD51$&$-ATM$&$-H2AX$&golden\\
$-ADD$&$-FANCD1N$&$-ATM$&golden\\
$-ICL$&$-USP1$&$-ATM$&golden\\
$-FANCM$&$-MRN$&$-ATR$&golden\\
$-MRN$&$-USP1$&$-ATM$&golden\\
$-FAN1$&$-HRR$&$-ATM$&golden\\
$-BRCA1$&$-ATM$&$-H2AX$&golden\\
$-FANCJBRCA1$&$-ADD$&$-ATM$&golden\\
$-MRN$&$-ssDNARPA$&$-ATM$&golden\\
$-FAcore$&$-ssDNARPA$&$-ATM$&golden\\
$-FAcore$&$-FANCD1N$&$-ATM$&golden\\
$-FANCD2I$&$-BRCA1$&$-ATM$&golden\\
$-ADD$&$-MRN$&$-ATM$&golden\\
$-ATM$&$-p53$&$-CHK2$&golden\\
$-RAD51$&$-ATM$&$-CHK2$&golden\\
$-FANCM$&$-ATM$&$-H2AX$&golden\\
$-ADD$&$-PCNATLS$&$-ATM$&golden\\
$-FANCJBRCA1$&$-ATM$&$-p53$&golden\\
$-FANCM$&$-MRN$&$-ATM$&golden
\end{tabular}
\end{center}
\end{footnotesize}

\newpage

\begin{footnotesize}
\begin{center}
\begin{tabular}{l l l|l}
$-FANCJBRCA1$&$-ATM$&$-CHK2$&golden\\
$-FANCD2I$&$-USP1$&$-ATM$&golden\\
$-ADD$&$-ATM$&$-CHK2$&golden\\
$-FANCD2I$&$-FANCD1N$&$-ATM$&golden\\
$-MRN$&$-HRR$&$-ATM$&golden\\
$-ICL$&$-DSB$&$-USP1$&golden\\
$-FAN1$&$-FANCD1N$&$-ATM$&golden\\
$-FAN1$&$-ATM$&$-H2AX$&golden\\
$-FANCJBRCA1$&$-FAN1$&$-ATM$&golden\\
$-ssDNARPA$&$-ATM$&$-H2AX$&golden\\
$-ATM$&$-CHK1$&$-CHK2$&golden\\
$-ADD$&$-HRR$&$-ATM$&golden\\
$-ATM$&$-p53$&$-CHK1$&golden\\
$-FAcore$&$-ATM$&$-H2AX$&golden\\
$-FANCD2I$&$-ATM$&$-CHK2$&golden\\
$-FAN1$&$-RAD51$&$-ATM$&golden\\
$-FANCD2I$&$-RAD51$&$-ATM$&golden\\
$-FANCJBRCA1$&$-XPF$&$-ATM$&golden\\
$-ICL$&$-FANCJBRCA1$&$-DSB$&golden\\
$-ssDNARPA$&$-HRR$&$-ATM$&golden\\
$-MRN$&$-BRCA1$&$-ATM$&golden\\
$-FANCM$&$-FAN1$&$-ATM$&golden\\
$-ssDNARPA$&$-ATM$&$-p53$&golden\\
$-FAN1$&$-ATM$&$-CHK2$&golden\\
$-FANCD2I$&$-ssDNARPA$&$-ATM$&golden\\
$-FANCD2I$&$-FAN1$&$-ATM$&golden\\
$-XPF$&$-HRR$&$-ATM$&golden\\
$-FAN1$&$-BRCA1$&$-ATM$&golden\\
$-ADD$&$-ATM$&$-CHK1$&golden\\
$-FAcore$&$-HRR$&$-ATM$&golden\\
$-XPF$&$-ATM$&$-CHK1$&golden\\
$-ADD$&$-BRCA1$&$-ATM$&golden\\
$-ICL$&$-FAN1$&$-DSB$&golden\\
$-ADD$&$-ATM$&$-p53$&golden\\
$-ICL$&$-MUS81$&$-ATM$&golden\\
$-FAcore$&$-RAD51$&$-ATM$&golden\\
$-ATM$&$-CHK1$&$-H2AX$&golden\\
$-ICL$&$-MRN$&$-ATM$&golden\\
$-ssDNARPA$&$-ATM$&$-CHK2$&golden\\
$-XPF$&$-RAD51$&$-ATM$&golden\\
$-FANCM$&$-ATM$&$-CHK1$&golden\\
$-ICL$&$-DSB$&$-KU$&golden\\
$-ICL$&$-MRN$&$-ATR$&golden\\
$-ssDNARPA$&$-RAD51$&$-ATM$&golden\\
$-FANCJBRCA1$&$-ssDNARPA$&$-ATM$&golden\\
$-XPF$&$-ATM$&$-p53$&golden\\
$-FAcore$&$-MRN$&$-ATM$&golden\\
$-HRR$&$-ATM$&$-H2AX$&golden\\
$-HRR$&$-ATM$&$-p53$&golden\\
$-FANCJBRCA1$&$-FANCD1N$&$-ATM$&golden\\
$-FANCM$&$-ADD$&$-ATM$&golden\\
$-FAcore$&$-ATM$&$-CHK2$&golden\\
$-ICL$&$-ATM$&$-CHK1$&golden\\
$-MRN$&$-FANCD1N$&$-ATM$&golden\\
$-ADD$&$-ssDNARPA$&$-ATM$&golden\\
$-MRN$&$-RAD51$&$-ATM$&golden\\
$-FANCD1N$&$-ATM$&$-p53$&golden\\
$-FANCD1N$&$-RAD51$&$-ATM$&golden\\
$-BRCA1$&$-ATM$&$-CHK2$&golden\\
$-ADD$&$-RAD51$&$-ATM$&golden
\end{tabular}
\end{center}
\end{footnotesize}

\newpage

\begin{footnotesize}
\begin{center}
\begin{tabular}{l l l|l}
$-ICL$&$-DSB$&$-FANCD1N$&golden\\
$-ICL$&$-RAD51$&$-ATM$&golden\\
$-ICL$&$-ATM$&$-CHK2$&golden\\
$-FANCD1N$&$-ATM$&$-H2AX$&golden\\
$-MRN$&$-ATM$&$-H2AX$&golden\\
$-FAcore$&$-FAN1$&$-ATM$&golden\\
$-ICL$&$-XPF$&$-ATM$&golden\\
$-FANCD2I$&$-ADD$&$-ATM$&golden\\
$-FANCD2I$&$-ATM$&$-H2AX$&golden\\
$-ICL$&$-ATR$&$-ATM$&golden\\
$-FANCM$&$-HRR$&$-ATM$&golden\\
$-USP1$&$-ATM$&$-H2AX$&golden\\
$-ICL$&$-DSB$&$-RAD51$&golden\\
$-ICL$&$-ATM$&$-H2AX$&golden\\
$-FANCD1N$&$-USP1$&$-ATM$&golden\\
$-FANCM$&$-FANCD2I$&$-ATM$&golden\\
$-FANCD2I$&$-MRN$&$-ATM$&golden\\
$-FAcore$&$-ADD$&$-ATM$&golden\\
$-ICL$&$-FAcore$&$-ATM$&golden\\
$-FANCM$&$-ssDNARPA$&$-ATM$&golden\\
$-XPF$&$-ATM$&$-H2AX$&golden\\
$-FAcore$&$-USP1$&$-ATM$&golden\\
$-HRR$&$-ATM$&$-CHK1$&golden\\
$-BRCA1$&$-RAD51$&$-ATM$&golden\\
$-FAN1$&$-ADD$&$-ATM$&golden\\
$-FANCJBRCA1$&$-MRN$&$-ATM$&golden\\
$-FANCM$&$-USP1$&$-ATM$&golden\\
$-FANCJBRCA1$&$-ATM$&$-H2AX$&golden\\
$-FANCM$&$-FAcore$&$-ATM$&golden\\
$-HRR$&$-USP1$&$-ATM$&golden\\
$-ICL$&$-FANCM$&$-ATM$&golden\\
$-ICL$&$-DSB$&$-ssDNARPA$&golden\\
$-FAN1$&$-USP1$&$-ATM$&golden\\
$-FANCM$&$-FANCJBRCA1$&$-ATM$&golden\\
$-ssDNARPA$&$-ATM$&$-CHK1$&golden\\
$-FAcore$&$-FANCJBRCA1$&$-ATM$&golden\\
$-FANCD2I$&$-HRR$&$-ATM$&golden\\
$-FANCD2I$&$-FANCJBRCA1$&$-ATM$&golden\\
$-XPF$&$-ssDNARPA$&$-ATM$&golden\\
$-USP1$&$-ATM$&$-CHK1$&golden\\
$-ICL$&$-DSB$&$-ATM$&golden\\
$-ICL$&$-ADD$&$-DSB$&golden\\
$-USP1$&$-ATM$&$-CHK2$&golden\\
$-XPF$&$-BRCA1$&$-ATM$&golden\\
$-RAD51$&$-ATM$&$-CHK1$&golden\\
$-FANCD1N$&$-ATM$&$-CHK2$&golden\\
$-RAD51$&$-HRR$&$-ATM$&golden\\
$-ICL$&$-ATM$&$-p53$&golden\\
$-ICL$&$-DSB$&$-DNAPK$&golden\\
$-FANCM$&$-FANCD1N$&$-ATM$&golden\\
$-BRCA1$&$-FANCD1N$&$-ATM$&golden\\
$-ICL$&$-HRR$&$-ATM$&golden\\
$-FANCJBRCA1$&$-HRR$&$-ATM$&golden\\
$-USP1$&$-ATM$&$-p53$&golden\\
$-XPF$&$-ATM$&$-CHK2$&golden\\
$-ICL$&$-DSB$&$-CHK2$&golden\\
$-ICL$&$-XPF$&$-DSB$&golden\\
$-ssDNARPA$&$-FANCD1N$&$-ATM$&golden\\
$-FANCJBRCA1$&$-RAD51$&$-ATM$&golden\\
$-ICL$&$-DSB$&$-ATR$&golden
\end{tabular}
\end{center}
\end{footnotesize}

\newpage

\begin{footnotesize}
\begin{center}
\begin{tabular}{l l l|l}
$-HRR$&$-ATM$&$-CHK2$&golden\\
$-ADD$&$-USP1$&$-ATM$&golden\\
$-FANCM$&$-RAD51$&$-ATM$&golden\\
$-FANCJBRCA1$&$-ATM$&$-CHK1$&golden\\
$-FANCM$&$-ATM$&$-p53$&golden\\
$-XPF$&$-FANCD1N$&$-ATM$&golden\\
$-FAcore$&$-BRCA1$&$-ATM$&golden\\
$-ICL$&$-DSB$&$-NHEJ$&golden\\
$-BRCA1$&$-ATM$&$-p53$&golden\\
$-BRCA1$&$-HRR$&$-ATM$&golden\\
$-FANCJBRCA1$&$-USP1$&$-ATM$&golden\\
$-ssDNARPA$&$-USP1$&$-ATM$&golden\\
$-ICL$&$-DSB$&$-H2AX$&golden\\
$-FANCM$&$-BRCA1$&$-ATM$&golden\\
$-MRN$&$-ATM$&$-CHK1$&golden\\
$-ICL$&$-FANCJBRCA1$&$-ATM$&golden\\
$-FANCD1N$&$-ATM$&$-CHK1$&golden\\
$-ICL$&$-DSB$&$-BRCA1$&golden\\
$-MRN$&$-ATM$&$-CHK2$&golden\\
$-FANCJBRCA1$&$-BRCA1$&$-ATM$&golden\\
$-FAN1$&$-ssDNARPA$&$-ATM$&golden\\
$-MRN$&$-ATM$&$-p53$&golden\\
$-FANCD1N$&$-HRR$&$-ATM$&golden\\
$-ICL$&$-MUS81$&$-DSB$&golden\\
$-ICL$&$-DSB$&$-p53$&golden\\
$-XPF$&$-USP1$&$-ATM$&golden\\
$-XPF$&$-ADD$&$-ATM$&golden\\
$-ATM$&$-p53$&$-H2AX$&golden\\
$-ICL$&$-FANCM$&$-DSB$&golden\\
$-ICL$&$-DSB$&$-HRR$&golden\\
$-ICL$&$-BRCA1$&$-ATM$&golden\\
$-RAD51$&$-USP1$&$-ATM$&golden\\
$-ICL$&$-FAN1$&$-ATM$&golden\\
$-ICL$&$-ADD$&$-ATM$&golden\\
$-ICL$&$-DSB$&$-CHK1$&golden\\
$-ICL$&$-FANCD2I$&$-DSB$&golden\\
$-ICL$&$-FANCD2I$&$-ATM$&golden
\end{tabular}
\end{center}
\end{footnotesize}

\newpage

\subsection{Appendix 3}
\label{appendix3}
Therapeutic bullets found for the case study using the new criterion.

\begin{footnotesize}
\begin{center}
\begin{tabular}{l l|r c r|r r}
\multicolumn{2}{c}{bullet}&\multicolumn{3}{|c|}{gain}&$B_{physio1}$&$B_{patho1}$\\ \hline
$-FANCM$&&$29.4\%$&$\to$&$44.6\%$&$44.6\%$&$55.4\%$\\
$-FANCD2I$&&$29.4\%$&$\to$&$30.4\%$&$30.4\%$&$69.6\%$\\
$-XPF$&&$29.4\%$&$\to$&$46.2\%$&$46.2\%$&$53.8\%$\\
$-FAN1$&&$29.4\%$&$\to$&$32.9\%$&$32.9\%$&$67.1\%$\\
$-ATM$&&$29.4\%$&$\to$&$100\%$&$100\%$&$0\%$\\
$-ICL$&$-FANCD2I$&$29.4\%$&$\to$&$30.9\%$&$30.9\%$&$69.1\%$\\
$-ICL$&$-MUS81$&$29.4\%$&$\to$&$53\%$&$53\%$&$47\%$\\
$-ICL$&$-XPF$&$29.4\%$&$\to$&$58.6\%$&$58.6\%$&$41.4\%$\\
$-ICL$&$-FAN1$&$29.4\%$&$\to$&$33.9\%$&$33.9\%$&$66.1\%$\\
$-ICL$&$-DSB$&$29.4\%$&$\to$&$100\%$&$100\%$&$0\%$\\
$-ICL$&$-ATM$&$29.4\%$&$\to$&$100\%$&$100\%$&$0\%$\\
$-FANCM$&$-FAcore$&$29.4\%$&$\to$&$45.8\%$&$45.8\%$&$54.2\%$\\
$-FANCM$&$-FANCD2I$&$29.4\%$&$\to$&$46.3\%$&$46.3\%$&$53.7\%$\\
$-FANCM$&$-FAN1$&$29.4\%$&$\to$&$47.3\%$&$47.3\%$&$52.7\%$\\
$-FANCM$&$-ADD$&$29.4\%$&$\to$&$47.3\%$&$47.3\%$&$52.7\%$\\
$-FANCM$&$-FANCD1N$&$29.4\%$&$\to$&$44.6\%$&$44.6\%$&$55.4\%$\\
$-FANCM$&$-RAD51$&$29.4\%$&$\to$&$44.6\%$&$44.6\%$&$55.4\%$\\
$-FANCM$&$-HRR$&$29.4\%$&$\to$&$44.1\%$&$44.1\%$&$55.9\%$\\
$-FANCM$&$-USP1$&$29.4\%$&$\to$&$44.3\%$&$44.3\%$&$55.7\%$\\
$-FANCM$&$-ATM$&$29.4\%$&$\to$&$100\%$&$100\%$&$0\%$\\
$-FAcore$&$-FANCD2I$&$29.4\%$&$\to$&$30.4\%$&$30.4\%$&$69.6\%$\\
$-FAcore$&$-FAN1$&$29.4\%$&$\to$&$33\%$&$33\%$&$67\%$\\
$-FAcore$&$-ATM$&$29.4\%$&$\to$&$100\%$&$100\%$&$0\%$\\
$-FANCD2I$&$-FAN1$&$29.4\%$&$\to$&$33.2\%$&$33.2\%$&$66.8\%$\\
$-FANCD2I$&$-ADD$&$29.4\%$&$\to$&$30.5\%$&$30.5\%$&$69.5\%$\\
$-FANCD2I$&$-FANCD1N$&$29.4\%$&$\to$&$30.4\%$&$30.4\%$&$69.6\%$\\
$-FANCD2I$&$-RAD51$&$29.4\%$&$\to$&$30.4\%$&$30.4\%$&$69.6\%$\\
$-FANCD2I$&$-USP1$&$29.4\%$&$\to$&$30.4\%$&$30.4\%$&$69.6\%$\\
$-FANCD2I$&$-ATM$&$29.4\%$&$\to$&$100\%$&$100\%$&$0\%$\\
$-FANCJBRCA1$&$-ATM$&$29.4\%$&$\to$&$100\%$&$100\%$&$0\%$\\
$-XPF$&$-ADD$&$29.4\%$&$\to$&$46.2\%$&$46.2\%$&$53.8\%$\\
$-XPF$&$-FANCD1N$&$29.4\%$&$\to$&$46.2\%$&$46.2\%$&$53.8\%$\\
$-XPF$&$-RAD51$&$29.4\%$&$\to$&$46.2\%$&$46.2\%$&$53.8\%$\\
$-XPF$&$-HRR$&$29.4\%$&$\to$&$45.3\%$&$45.3\%$&$54.7\%$\\
$-XPF$&$-USP1$&$29.4\%$&$\to$&$46.2\%$&$46.2\%$&$53.8\%$\\
$-XPF$&$-KU$&$29.4\%$&$\to$&$46.1\%$&$46.1\%$&$53.9\%$\\
$-XPF$&$-DNAPK$&$29.4\%$&$\to$&$46.1\%$&$46.1\%$&$53.9\%$\\
$-XPF$&$-NHEJ$&$29.4\%$&$\to$&$41.6\%$&$41.6\%$&$58.4\%$\\
$-XPF$&$-ATM$&$29.4\%$&$\to$&$100\%$&$100\%$&$0\%$\\
$-FAN1$&$-ADD$&$29.4\%$&$\to$&$32.9\%$&$32.9\%$&$67.1\%$\\
$-FAN1$&$-FANCD1N$&$29.4\%$&$\to$&$32.9\%$&$32.9\%$&$67.1\%$\\
$-FAN1$&$-RAD51$&$29.4\%$&$\to$&$32.9\%$&$32.9\%$&$67.1\%$\\
$-FAN1$&$-HRR$&$29.4\%$&$\to$&$32.2\%$&$32.2\%$&$67.8\%$\\
$-FAN1$&$-USP1$&$29.4\%$&$\to$&$32.9\%$&$32.9\%$&$67.1\%$\\
$-FAN1$&$-KU$&$29.4\%$&$\to$&$31.7\%$&$31.7\%$&$68.2\%$\\
$-FAN1$&$-DNAPK$&$29.4\%$&$\to$&$31\%$&$31\%$&$69\%$\\
$-FAN1$&$-ATM$&$29.4\%$&$\to$&$100\%$&$100\%$&$0\%$\\
$-ADD$&$-ATM$&$29.4\%$&$\to$&$100\%$&$100\%$&$0\%$\\
$-MRN$&$-ATM$&$29.4\%$&$\to$&$100\%$&$100\%$&$0\%$\\
$-BRCA1$&$-ATM$&$29.4\%$&$\to$&$100\%$&$100\%$&$0\%$\\
$-ssDNARPA$&$-ATM$&$29.4\%$&$\to$&$100\%$&$100\%$&$0\%$\\
$-FANCD1N$&$-ATM$&$29.4\%$&$\to$&$100\%$&$100\%$&$0\%$\\
$-RAD51$&$-ATM$&$29.4\%$&$\to$&$100\%$&$100\%$&$0\%$\\
$-HRR$&$-ATM$&$29.4\%$&$\to$&$100\%$&$100\%$&$0\%$\\
$-USP1$&$-ATM$&$29.4\%$&$\to$&$100\%$&$100\%$&$0\%$
\end{tabular}
\end{center}
\end{footnotesize}

\newpage

\begin{footnotesize}
\begin{center}
\begin{tabular}{l l|r c r|r r}
\multicolumn{2}{c}{bullet}&\multicolumn{3}{|c|}{gain}&$B_{physio1}$&$B_{patho1}$\\ \hline
$-ATM$&$-p53$&$29.4\%$&$\to$&$100\%$&$100\%$&$0\%$\\
$-ATM$&$-CHK1$&$29.4\%$&$\to$&$100\%$&$100\%$&$0\%$\\
$-ATM$&$-CHK2$&$29.4\%$&$\to$&$100\%$&$100\%$&$0\%$\\
$-ATM$&$-H2AX$&$29.4\%$&$\to$&$100\%$&$100\%$&$0\%$
\end{tabular}
\end{center}
\end{footnotesize}

\newpage

\bibliography{an_in_silico_target_identification}
\bibliographystyle{unsrt}

\end{document}